\documentclass[12pt]{article}

\setlength{\textheight}{680pt} \setlength{\topmargin}{-50pt}
\setlength{\textwidth}{500pt}
\setlength{\evensidemargin}{-10pt}
\setlength{\oddsidemargin}{-10pt}

\usepackage{amsthm}
\usepackage{amsmath}
\usepackage{amssymb}
\usepackage{amsfonts}
\usepackage{amscd}
\usepackage{graphicx}

\newtheorem{theorem}{Theorem}
 
\newtheorem{corollary}[theorem]{Corollary}

\newtheorem{remark}{Remark}%

\newcommand{\Real}{\mathbb{R}}

\newcommand{\Int}{\mathbb{Z}}

\newcommand{\mf}{\mathfrak}
\newcommand{\mc}{\mathcal}

\begin{document}

\title{Relative Equilibria and Periodic Orbits in a Binary Asteroid Model}

\author{Lennard F. Bakker, Nicholas J. Freeman \\ Brigham Young University, USA} %\email{bakker@mathematics.byu.edu}

\maketitle

\begin{abstract}We present a planar four-body model, called the Binary Asteroid Problem, for the motion of two asteroids (having small but positive masses) moving under the gravitational attraction of each other, and under the gravitational attraction of two primaries (with masses much larger than the two asteroids) moving in uniform circular motion about their center of mass. We show the Binary Asteroid Model has (at least) 6 relative equilibria and (at least) 10 one-parameter families of periodic orbits, two of which are of Hill-type. The existence of six relative equilibria and 8 one-parameter families of periodic orbits is obtained by a reduction of the Binary Asteroid Problem in which the primaries have equal mass, the asteroids have equal mass, and the positions of the asteroids are symmetric with respect to the origin. The remaining two one-parameter families of periodic orbits, which are of comet-type, are obtained directly in the Binary Asteroid Problem.\end{abstract}

\section{Introduction}\label{sec1}
In December 2017, the asteroid 2017 ${\rm YE}_5$ was discovered by Claudine Rinner as part of the Morocco Ouka\"imeden Sky Survey directed by Zouhair Benkhaldoun \cite{Davis}. Only later in June 2018, when 2017 ${\rm YE}_5$ passed close enough to Earth was it determined by ground-based radar imaging that it was a binary asteroid where the two asteroids have nearly equal mass, i.e., the mass ratio of the two asteroids is close to $1$ \cite{nasa}. There are many known binary asteroids with a larger mass central asteroid and a smaller mass satellite asteroid \cite{Johnston}, but only four known near-Earth binary asteroids with nearly equal mass, with 2017 ${\rm YE}_5$ being the fourth discovered. The other three known near-Earth binary asteroids with nearly equal mass are (69230) Hermes, (190166) 2005 ${\rm UP}_{156}$, and 1994 ${\rm CJ}_1$ \cite{Monteiro}. In the main asteroid belt there are two more binary asteroids with nearly equal mass, and these are 90 Antiope and (300163) 2006 ${\rm VW}_{139}$ \cite{Johnston}. Of these six binary asteroids with nearly equal mass, (300163) 2006 ${\rm VW}_{163}$ is also classified as a comet with designation 288P \cite{Johnston}, and 2017 ${\rm YE}_5$ is also considered a possible dormant Jupiter-family comet \cite{Monteiro}.

The standard model for the motion of an asteroid (be it a single asteroid, a binary asteroid with arbitrary mass ratio, etc.) is with six orbital elements (the orbital elements of known binary asteroids are listed in \cite{Johnston}). This model uses observational data of the asteroid and the integrable two-body problem to estimate the asteroid's orbital elements (e.g., see \cite{Espitia}) that describe the ellipse the asteroid approximately makes around the Sun.

Another model for the motion of a binary asteroid is the Restricted Hill Full $4$-Body Problem \cite{Scheeres}, which is known as the Binary Asteroid System. This model has one body with large mass (the Sun) at a large distance from the two bodies with smaller masses (the binary asteroid with arbitrary mass ratio) and a spacecraft (the zero-mass particle). It is assumed that two asteroids and the spacecraft remain close together. This model further assumes each of the asteroids has positive volume and a mass distribution. One of the asteroids is typically assumed to be a homogeneous sphere, and the other is typically assumed to be a constant density triaxial ellipsoid. A recent extension of the Binary Asteroid System model includes the Solar Radiation Pressure in the model, which, using an analytical solution as a seed for a multiple shooting method, has been applied to design a quasiperiodic forced hovering orbit above the one of asteroids for the unequal mass binary asteroid (66391) Moshup and Squannit \cite{Wang}. When both asteroids have irregular shape and heterogeneous mass distributions, a nested interpolation method has been developed, based on a hybrid gravity model on one asteroid and a finite element model of the other, which has been applied to the unequal mass binary asteroid (66391) Moshup and Squannit \cite{Lu}.

In this paper we present a four point-mass model for the motion of a binary asteroid. The spatial version of this model was first proposed for the binary asteroid 2017 ${\rm YE}_5$ \cite{Bakker}, but it applies to any two asteroids. The planar model, formulated here, has two bodies having large masses (called the primaries) and two bodies having small but {\it positive} masses (called the asteroids). We do not assume a priori that the masses of the primaries are equal nor do we assume a priori that the masses of the asteroids are equal. The primaries are prescribed to move in uniform circular motion around their center of mass (set at the origin), as is done in the Circular Restricted Planar Three-Body Problem. The two asteroids move under the gravitational attraction of the two primaries as well as the gravitational attraction of each other. We do not assume that the two asteroids are necessarily close to each other, that they are not a priori in the formation of a binary asteroid. The system of ODEs associated to the Hamiltonian in rotating coordinates (which Hamiltonian is derived in Section 2.2) we call the Binary Asteroid Problem. The analytic existence of 6 relative equilibria and 10 one-parameter families of periodic orbits in the Binary Asteroid Problem presented here form part of the second author's senior thesis \cite{Freeman}.

The Binary Asteroid Problem serves as a ``bridge'' model between two uncoupled copies of the Circular Restricted Planar Three-Body Problem and the general Planar Four-Body Problem. The connection with the Circular Restricted Planar Three-Body Problem will be demonstrated in this paper by relative equilibria of the COM reduced Binary Asteroid Problem obtained by passing to center of mass (COM) coordinates, where the center of mass is with respect to the two asteroids. This COM reduction requires that the masses of the primaries to be equal, that the masses of the asteroids to be equal, and that the two asteroids to be symmetric with respect to each other through their center of mass which set at the origin (Section \ref{COMRed}). As the common mass of the asteroids goes to $0$ each relative equilibria of the COM reduced Binary Asteroid Problem converges to one of the five equilibria of the Generalized Copenhagen Problem (Theorem \ref{ExistenceCritPts}). In a forth-coming paper \cite{Bakker}, we show that the spatial Binary Asteroid Problem inherits the 12 collinear relative equilibria from the general Four-Body Problem, and show that each of these relative equilibria limits to an equilibrium of the Generalized Copenhagen Problem as the masses of the two asteroids (not assumed equal) go to zero. These kinds of connections have been observed between relative equilibria in the equal mass Planar Four-Body Problem \cite{Simo} and the Copenhagen Problem when the value of the mass for a pair of bodies with equal masses goes to zero \cite{Roy}.

The COM reduced Binary Asteroid Problem has an interpretation as a restricted four-body problem. The two primaries and a fictitious third body, placed at the origin, whose mass is that of the common mass of the two asteroids of equal mass, form a collinear central configuration. A zero-mass particle then moves in the gravitational field determined by the three collinear bodies that are moving in uniform circular motion about the origin. In rotating coordinates, the Hamiltonian of the COM reduced Binary Asteroid Problem is similar to that of the Circular Restricted Four-Body Problem with Three Equal Primaries in a Collinear Central Configuration in \cite{Llibre2021}. Since the COM reduced Binary Asteroid Problem has an interpretation as a restricted four-body problem, we will apply and/or adapt techniques used in restricted problems (e.g. \cite{Llibre2021,AlvarezRamirez2015,Stoica,meyerOffin}) in our analysis.

Our motivations for developing the Binary Asteroid Problem consisted of following questions.
\begin{enumerate}
    \item[(1)] If two single asteroids (not in a binary asteroid formation) that are far away from the primaries eventually pass simultaneously close enough to one of the primaries, could the two asteroids be ejected from the close encounter as a binary asteroid?
    \item[(2)] If a binary asteroid far away from the primaries eventually passes close enough to one of the primaries, could the binary asteroid be pulled apart into two single asteroids going in separate directions? 
    \item[(3)] If one asteroid is orbiting a primary and other asteroid that is far away eventually passes close enough to that primary and its orbiting asteroid, could the two asteroids be ejected as a binary asteroid?
    \item[(4)] If a binary asteroid far away from the primaries eventually passes close enough to one of the primaries, could the binary asteroid be pulled apart where one of the asteroid is captured into an orbit around that primary and the other asteroid is ejected?
    \item[(5)] Could a relative equilibrium be the alpha and/or omega limit for the two asteroids, for which the two asteroids are close together for a long time but are pulled apart as they approach the relative equilibria in backward or forward time?
    \item[(6)] Do there exist periodic solutions in a model for binary asteroids in which one of the asteroids orbits one primary and the other asteroid orbits the other primary (a Hill-type orbit)?
\end{enumerate}
We leave questions (1), (2), (3), and (4) for future research, but based on numerical simulations we have realizing affirmative answers for these four questions in the Binary Asteroid Problem (see e.g. \cite{Freeman}), we believe these can be answered analytically through regularization. A partial answer for question (5) for the COM reduced Binary Asteroid Problem is that all the relative equilibria have stable and unstable manifolds and so provide alpha and omega limits for the binary asteroid (Section 3). In \cite{Bakker} we will show that the $12$ collinear relative equilibria in the Binary Asteroid Problem have stable and unstable manifolds. What we do not know, for both the Binary Asteroid Problem and the COM reduced Binary Asteroid Problem, is where those stable and unstable manifolds go in phase space and whether they intersect. An answer for question (6) in the COM reduced Binary Asteroid Problem is that there exists two families of periodic orbits in which one asteroid orbits one primary and the other asteroid orbits the other primary (Section 5). 

The Binary Asteroid Problem readily extends to $3$ or more primaries and $3$ or more asteroids. The $3$ or more primaries, with large masses, set in a central configuration, are prescribed a uniform circular motion about their center of mass, set at the origin. The $3$ or more asteroids, with small but positive small masses relative to the masses of the primaries, move either in a spatial or planar setting in relation to the primaries. Some research with $3$ primaries and $2$ asteroids has been done where the three primaries are in an equilateral triangle with two of the masses of the primaries relatively small with respect to the the remaining primary so that the barycenter of the three primaries is close to the dominant mass primary. In this situation we numerically found relative equilibria on each of the lines that pass through the barycenter of the primaries and the primaries \cite{Cochran}.

This paper is structured as follows. In Section 2 we derive a time-dependent Hamiltonian for the Binary Asteroid Model in a fixed coordinate system. Passing to rotating coordinates, we obtain a time-independent Hamiltonian for the Binary Asteroid Problem. Using COM coordinates we derive the Hamiltonian for the COM reduced Binary Asteroid Problem. In Section 3 we use an amended potential associated to the Hamiltonian of the COM reduced Binary Asteroid Problem, and two symmetries of the amended potential, to prove the existence of six relative equilibria, four of which are collinear with the two primaries, and two of which form convex kite configurations (and are each a rhombus) with the two primaries and have two axes of symmetries. We show how each of the these six relative equilibria limits to the corresponding relative equilibrium in the Generalized Copenhagen Problem. We also show that the four collinear relative equilibria are saddle-centers and that the two kites are hyperbolic. In Section 4 we prove that in the COM reduced Binary Asteroid Problem there exist two one-parameter families of periodic orbits near the origin (the point of collision of the two asteroids). The two asteroids on these periods orbits are close to each other and so form a binary asteroid. In Section 5 we prove in the COM reduced Binary Asteroid Problem the existence of two one-parameter families of near-circular Hill-type orbits in which one asteroid orbits one primary and the other asteroid orbits the other primary. We show numerically that these families continue when the masses of the two primaries vary slightly from being equal. In Section 6 we prove the existence of two one-parameter families of near-circular comet-type orbits in the Binary Asteroid Problem where the angle between the position vectors for the two asteroids is approximately $\pi$. In Section 7 we state our conclusions and future research.

\section{A Binary Asteroid Model}
\subsection{Initial Hamiltonian}
For two bodies with masses $M_1, M_2 > 0$ (hereafter called primaries) having respective positions $\mf{X}_1 = (X_1, Y_1)$, $\mf{X}_2 = (X_2, Y_2)$ in the plane, Newton's law of gravitation says their motion is governed by the equations
\begin{align}
\ddot{\mf{X}}_1 = -\frac{M_2(\mf{X}_1 - \mf{X}_2)}{\|\mf{X}_1 - \mf{X}_2\|^3},&  & \ddot{\mf{X}}_2 = -\frac{M_1(\mf{X}_2 - \mf{X}_1)}{\|\mf{X}_1 - \mf{X}_2\|^3}.
\label{eqn1}
\end{align}
The time unit is chosen so that the gravitation constant is 1. The well-known circular solution to \eqref{eqn1} is given by
\begin{align}
X_1(t) = -\frac{aM_2}{M}\cos\left(\frac{2\pi t}{P}\right), & & Y_1(t) = -\frac{aM_2}{M}\sin\left(\frac{2\pi t}{P}\right), \nonumber\\
X_2(t) = \frac{aM_1}{M}\cos\left(\frac{2\pi t}{P}\right), & & Y_2(t) = \frac{aM_1}{M}\sin\left(\frac{2\pi t}{P}\right),
\label{eqn2}
\end{align}
where $a > 0$ is the distance between the two primaries, $M = M_1 + M_2$ is the total mass of the primaries, and $P > 0$ is the period of the primaries' circular orbit. The three parameters $a$, $M$, and $P$ are related by Kepler's Third Law,
\begin{align}
    \frac{P^2}{a^3} = \frac{4\pi^2}{M}.
    \label{eqn3}
\end{align}
Throughout this paper, we allow $a$ and $M$ to be arbitrary but fixed, so that $P$ is always determined by \eqref{eqn3}.

We introduce two more bodies with masses $m_1, m_2 > 0$ (hereafter called asteroids) at respective positions $q_1 = (x_1, y_1)$, $q_2 = (x_2, y_2)$. These asteroids have kinetic energies
\begin{align*}
K_1 = \frac{p_1^2}{2m_1} = \frac{1}{2m_1}(p_{x_1}^2 + p_{y_1}^2), & & K_2 = \frac{p_2^2}{2m_2} = \frac{1}{2m_2}(p_{x_2}^2 + p_{y_1}^2),
\end{align*}
where $p_1 = (p_{x_1}, p_{y_1})$ and $p_2 = (p_{x_2}, p_{y_2})$ are the corresponding momenta. We assume that
\[ \max\{m_1, m_2\} \ll \min\{M_1,M_2\},\] 
so the asteroids do not appreciably affect the motion of the primaries but do affect each other. Therefore, to a reasonable approximation, the potential energy of asteroids is given by
\begin{align*}
U & = -\frac{m_1m_2}{\|q_1 - q_2\|} - \frac{m_1M_1}{\|\mf{X}_1(t) - q_1\|} - \frac{m_2M_1}{\|\mf{X}_1(t) - q_2\|} \\ 
& \ \ \ \ - \frac{m_1M_2}{\|\mf{X}_2(t) - q_1\|} - \frac{m_2M_2}{\|\mf{X}_2(t) - q_2\|}.
\end{align*}
The time-dependent Hamiltonian for this binary asteroid model is
\begin{align}
H(q_i,p_i,t) & = K_1 + K_2 + U.
\label{eqn4}
\end{align}

\subsection{Rotating Coordinates}
To eliminate the time dependence in \eqref{eqn4}, we use the rotation matrix
\begin{equation*}
\exp\left(\frac{2\pi tK}{P}\right) = \begin{bmatrix}\cos(2\pi t/P) & \sin(2\pi t/P) \\ -\sin(2\pi t/P) & \cos(2\pi t/P)\end{bmatrix}, \text{ where } K = \begin{bmatrix} 0 & 1 \\ -1 & 0 \end{bmatrix}.
\end{equation*}
We make the linear symplectic transformation
\begin{align*}
q_i \mapsto \exp\left(\frac{2\pi tK}{P}\right)q_i, & & p_i \mapsto \exp\left(\frac{2\pi tK}{P}\right)p_i.
\end{align*}
The associated remainder in the rotating coordinates is
\begin{equation*}
R(q_i,p_i) = -\frac{2\pi}{P}\left(q_1^TKp_1 + q_2^TKp_2\right).
\end{equation*}
Since $\exp\left(2\pi tK/P\right)$ is an orthogonal matrix, the norms in the denominators of the potential term are unchanged when we apply the rotation to the distance vectors. The Hamiltonian in the new coordinates is now the time-independent
\begin{align}
H_{BA} 
& = \frac{p_1^2}{2m_1} + \frac{p_2^2}{2m_2} - \frac{2\pi}{P}\left(q_1^TKp_1 + q_2^TKp_2\right) - \frac{m_1m_2}{\|q_1 - q_2\|} \nonumber\\
& \ \ \ \  - \frac{m_1M_1}{\|q_1 + (aM_2/M, 0)\|} - \frac{m_2M_1}{\|q_2 + (aM_2/M, 0)\|} \nonumber\\
& \ \ \ \  - \frac{m_1M_2}{\|q_1 - (aM_1/M, 0)\|} - \frac{m_2M_2}{\|q_2 - (aM_1/M, 0)\|}.
\label{eqn5}
\end{align}
In terms of the components of the positions and momenta, \eqref{eqn5} becomes
\begin{align}
H_{BA} & = \frac{1}{2m_1}\left(p_{x_1}^2 + p_{y_1}^2\right) + \frac{1}{2m_2}\left(p_{x_2}^2 + p_{y_2}^2\right) - \frac{2\pi}{P}\left(x_1p_{y_1} - y_1p_{x_1}\right)\nonumber\\
& \ \ \ \ - \frac{2\pi}{P}\left(x_2p_{y_2} - y_2p_{x_2}\right) - \frac{m_1m_2}{\sqrt{(x_1 - x_2)^2 + (y_1 - y_2)^2}} \nonumber\\
& \ \ \ \ - \frac{m_1M_1}{\sqrt{(x_1 + aM_2/M)^2 + y_1^2}} - \frac{m_2M_1}{\sqrt{(x_2 + aM_2/M)^2 + y_2^2}} \nonumber\\
& \ \ \ \ - \frac{m_1M_2}{\sqrt{(x_1 - aM_1/M)^2 + y_1^2}} - \frac{m_2M_2}{\sqrt{(x_2 - aM_1/M)^2 + y_2^2}}.
\label{eqn6}
\end{align}
The Hamiltonian $H_{BA}$ in \eqref{eqn6} is called the (planar) Binary Asteroid Hamiltonian and the associated system of ODEs is called the (planar) Binary Asteroid Problem.

\subsection{COM Coordinates}\label{COMRed}
We will reduce the number of degrees of freedom in \eqref{eqn6} from $4$ to $2$ by passing to symmetric configurations for the binary asteroids under the assumption of equal masses
\[ m_1 = m_2 = m\]
for the binary asteroids, and under the assumption of equal masses
\[ M_1 = M_2 =  \frac{M}{2}\]
for the primaries. The quantity
\[ \nu = \frac{M_1 - M_2}{2}\] 
measures the deviation between the masses of the primaries and will be use as a perturbation parameter for an expansion of the Hamiltonian. With the matrix 
\begin{equation*}
    \mc{M} = \frac{1}{2}\begin{bmatrix}1 & 0 & -1 & 0\\ 0 & 1 & 0 & -1 \\ 1 & 0 & 1 & 0 \\ 0 & 1 & 0 & 1\end{bmatrix}
\end{equation*}
we define the linear symplectic change of coordinates
\begin{align*}
(x_1,y_1,x_2,y_2)  \mapsto (x,y,\alpha,\beta),  & &  (p_{x_1},p_{y_1},p_{x_2},p_{y_2})  \mapsto (p_x, p_y, p_\alpha, p_\beta),
\end{align*}
by
\begin{align*}
(x,y,\alpha,\beta)^T & = \mc{M}(x_1,y_1,x_2,y_2)^T, \\  (p_x, p_y, p_\alpha, p_\beta)^T & = (\mc{M}^T)^{-1}(p_{x_1},p_{y_1},p_{x_2},p_{y_2})^T.
\end{align*}
Explicitly, the change of coordinates is
\begin{equation} \label{eqn7}
\begin{aligned}
& x = \frac{x_1 - x_2}{2},\ y = \frac{y_1 - y_2}{2},\ \alpha = \frac{x_1 + x_2}{2},\ \beta = \frac{y_1 + y_2}{2}, \\
& p_x = p_{x_1} - p_{x_2},\ p_y = p_{y_1} - p_{y_2},\ p_\alpha = p_{x_1} + p_{x_2},\ p_\beta = p_{y_1} + p_{y_2}.
\end{aligned}
\end{equation}
We call the new coordinates the COM \textit{coordinates} (where COM means ``Center Of Mass"). Physically, $\alpha$ and $\beta$ are the components of the asteroids' barycenter, while $x$ and $y$ are the components of half of the position of the asteroid with mass $m_1$ relative to asteroid with mass $m_2$. The change of coordinates in \eqref{eqn7} is similar to Jacobi coordinates for the two-body problem.

In the COM coordinates the Hamiltonian $H_{BA}$ is analytic in the parameter $\nu$. By expanding $H_{BA}$ in a Taylor series in $\nu$ about $\nu = 0$, we obtain the Hamiltonian
\begin{equation}
\begin{aligned}\label{eqn8}
H_{COM} & = \frac{1}{4m}\left(p_x^2 + p_y^2\right) + \frac{1}{4m}\left(p_\alpha^2 + p_\beta^2\right) \\
& \ \ \ \ - \frac{2\pi}{P}\left(xp_y - yp_x + \alpha p_\beta - \beta p_\alpha\right) - \frac{m^2}{2\sqrt{x^2 + y^2}} \\
& \ \ \ \ - \frac{mM/2}{\sqrt{(x + \alpha + \frac{a}{2})^2 + (y+\beta)^2}} - \frac{mM/2}{\sqrt{(x - \alpha - \frac{a}{2})^2 + (y-\beta)^2}} \\
& \ \ \ \ - \frac{mM/2}{\sqrt{(x + \alpha - \frac{a}{2})^2 + (y+\beta)^2}} - \frac{mM/2}{\sqrt{(x - \alpha + \frac{a}{2})^2 + (y-\beta)^2}} \\
& \ \ \ \ + O(\nu).
\end{aligned}
\end{equation}
We consider the problem when $\nu = 0$. We will comment on the problem when $\nu \neq 0$ in Subsection 5.4.

\subsection{Reduction}
The equations of motion in the COM coordinates $(\alpha,\beta,p_\alpha,p_\beta)$ defined by \eqref{eqn8} are
\begin{align*}
\dot{\alpha} & = \frac{p_\alpha}{2m} + \frac{2\pi \beta}{P} & 
\dot{\beta} & = \frac{p_\beta}{2m} - \frac{2\pi \alpha}{P}
\end{align*}
\begin{align}
\dot{p}_{\alpha} & = \frac{2\pi p_\beta}{P} - \frac{mM(x+\alpha + a/2)/2}{({(x + \alpha + \frac{a}{2})^2 + (y+\beta)^2})^{3/2}} + \frac{mM(x-\alpha - a/2)/2}{((x - \alpha - \frac{a}{2})^2 + (y-\beta)^2)^{3/2}}\nonumber\\
& \ \ \ \ - \frac{mM(x+\alpha - a/2)/2}{((x + \alpha - \frac{a}{2})^2 + (y+\beta)^2)^{3/2}} + \frac{mM(x-\alpha + a/2)/2}{((x - \alpha + \frac{a}{2})^2 + (y-\beta)^2)^{3/2}}\nonumber\\
\dot{p}_{\beta} & = -\frac{2\pi p_\alpha}{P} - \frac{mM(y+\beta)/2}{((x + \alpha + \frac{a}{2})^2 + (y+\beta)^2)^{3/2}} + \frac{mM(y-\beta)/2}{((x - \alpha - \frac{a}{2})^2 + (y-\beta)^2)^{3/2}}\nonumber\\
& \ \ \ \ - \frac{mM(y+\beta)/2}{((x + \alpha - \frac{a}{2})^2 + (y+\beta)^2)^{3/2}} + \frac{mM(y-\beta)/2}{((x - \alpha + \frac{a}{2})^2 + (y-\beta)^2)^{3/2}}.
\label{eqn9}
\end{align}
The zero functions $\alpha = \beta = p_\alpha = p_\beta = 0$ satisfy \eqref{eqn9}. Setting $\alpha=\beta=p_\alpha=p_\beta = 0$ corresponds to the two asteroids having a fixed barycenter at the origin, where the position and the momentum of the asteroid of mass $m_1 = m$ satisfy
\[ (x_1,y_1) = (x,y),\  (p_{x_1},p_{y_1}) = \frac{1}{2}(p_x,p_y),\]
and the position and the momentum of asteroid of mass $m_2=m$ satisfy
\[ (x_2,y_2) = -(x,y),\ (p_{x_2},p_{y_2}) = -\frac{1}{2} (p_x,p_y). \]
Thus, the two asteroids move symmetrically opposite each other through the origin, where the coordinates $(x,y,p_x,p_y)$ describe the motion of the asteroid with mass $m_1=m$. By the symmetry, the motion of the asteroid with mass $m_2=m$ is described by the coordinates $(-x,-y,-p_x,-p_y)$.

The Hamiltonian for motion of the asteroid with mass $m=m_1$ is obtained by substitution of solution $\alpha=0$, $\beta=0$, $p_\alpha=0$, and $p_\beta=0$ in the Hamiltonian $H_{COM}$. This gives the Hamiltonian
\begin{align}
H_{xy}(x,y,p_x,p_y) & = \frac{1}{4m}\left(p_x^2 + p_y^2\right) - \frac{2\pi}{P}\left(xp_y - yp_x\right) - \frac{m^2}{2\sqrt{x^2+y^2}} \nonumber\\
& \ \ \ \ - \frac{mM}{\sqrt{(x+a/2)^2 + y^2}} - \frac{mM}{\sqrt{(x-a/2)^2 + y^2}}.
\label{eqn10}
\end{align}
It is straight-forward to verify that the four equations of the Hamiltonian system associated with $H_{xy}$ agree with the equations $\dot x = \partial H_{\rm COM}/\partial p_x$, $\dot y = \partial H_{\rm COM}/\partial p_y$, $\dot p_x = - \partial H_{\rm COM}/\partial x$, and $\dot p_y = - \partial H_{\rm COM}/\partial y$ with $\alpha=\beta=p_\alpha=p_\beta=0$ substituted into them. This completes the reduction to symmetric configurations when $m_1=m_2=m$ and $M_1=M_2=M/2$ and gives the following result.

\begin{theorem}[COM Reduction]\label{thm1}
Each equilibrium or $T$-periodic solution
\[ (x(t), y(t), p_x(t), p_y(t))\] 
of the Hamiltonian system for $H_{xy}$ in \eqref{eqn10} defines an equilibrium or $T$-periodic solution of the Hamiltonian system for $H_{BA}$ in \eqref{eqn6} $($the planar Binary Asteroid Problem$)$ when $m_1 = m_2 = m$ and $M_1 = M_2 = M/2$, given in the COM coordinates \eqref{eqn7} by \[ (x(t),y(t),0,0,p_x(t),p_y(t),0,0).\]
\end{theorem}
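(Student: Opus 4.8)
The plan is to realize the $(x,y,p_x,p_y)$-system of $H_{xy}$ as the restriction of the full COM system to the invariant coordinate subspace $N = \{\alpha=\beta=p_\alpha=p_\beta=0\}$, and then to transport solutions back to the original coordinates through the linear symplectic change \eqref{eqn7}. First I would note that, since \eqref{eqn7} is linear and symplectic and the equal-mass hypotheses $m_1=m_2=m$, $M_1=M_2=M/2$ force $\nu=(M_1-M_2)/2=0$, the Hamiltonian $H_{BA}$ of \eqref{eqn6} written in the COM coordinates is \emph{exactly} the $\nu=0$ part of $H_{COM}$ in \eqref{eqn8}, with the $O(\nu)$ remainder genuinely vanishing; a direct substitution $x_1=\alpha+x$, $x_2=\alpha-x$, $y_1=\beta+y$, $y_2=\beta-y$, $p_{x_1}=(p_\alpha+p_x)/2$, $p_{x_2}=(p_\alpha-p_x)/2$, $p_{y_1}=(p_\beta+p_y)/2$, $p_{y_2}=(p_\beta-p_y)/2$ confirms this. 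Because a time-independent linear symplectic change of coordinates carries Hamiltonian flows to Hamiltonian flows and preserves both equilibria and periods of periodic orbits, it then suffices to produce, for each equilibrium or $T$-periodic solution $(x(t),y(t),p_x(t),p_y(t))$ of the $H_{xy}$ system, a solution of the $H_{COM}$ system (with $\nu=0$) of the asserted form $(x(t),y(t),0,0,p_x(t),p_y(t),0,0)$.

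Next I would verify that $N$ is invariant under the flow of $H_{COM}$ with $\nu=0$. This is the one computation requiring care, and it is essentially already carried out between \eqref{eqn9} and \eqref{eqn10}: evaluating the right-hand sides of \eqref{eqn9} at $\alpha=\beta=p_\alpha=p_\beta=0$, the equations for $\dot\alpha$ and $\dot\beta$ vanish at once, while in $\dot p_\alpha$ and $\dot p_\beta$ the four primary-attraction terms cancel in two pairs. This cancellation is precisely the fourfold symmetry of the primary positions $(\pm a/2,0)$ produced by the equal-mass assumption, and it is the heart of the reduction. Hence a curve $t\mapsto(x(t),y(t),0,0,p_x(t),p_y(t),0,0)$ that starts in $N$ stays in $N$ for all time.

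Finally I would check that on $N$ the remaining four equations are exactly Hamilton's equations for $H_{xy}$. Since $H_{xy}(x,y,p_x,p_y)=H_{COM}(x,y,0,0,p_x,p_y,0,0)$ and $x,y,p_x,p_y$ are independent coordinates, the chain rule gives $\partial H_{xy}/\partial x=(\partial H_{COM}/\partial x)|_N$ and likewise for $\partial/\partial y$, $\partial/\partial p_x$, $\partial/\partial p_y$; therefore, along a curve in $N$, the equations $\dot x=\partial H_{COM}/\partial p_x$, $\dot y=\partial H_{COM}/\partial p_y$, $\dot p_x=-\partial H_{COM}/\partial x$, $\dot p_y=-\partial H_{COM}/\partial y$ reduce to the $H_{xy}$ system. (Equivalently, $N$ is a symplectic submanifold — it is cut out by setting two conjugate pairs to zero — on which the flow restricts to the Hamiltonian flow of $H_{COM}|_N=H_{xy}$.) Combining the three steps: given an equilibrium or $T$-periodic solution of the $H_{xy}$ system, the curve $(x(t),y(t),0,0,p_x(t),p_y(t),0,0)$ is an equilibrium or $T$-periodic solution of the $H_{COM}$ (with $\nu=0$) system, hence of the $H_{BA}$ system expressed in the COM coordinates, and hence, via the linear symplectic correspondence \eqref{eqn7}, an equilibrium or $T$-periodic solution of the planar Binary Asteroid Problem given in COM coordinates by $(x(t),y(t),0,0,p_x(t),p_y(t),0,0)$. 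I do not expect a serious obstacle: the only genuine content is the invariance bookkeeping in the second step, and everything else is formal.
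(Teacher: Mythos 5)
Your proposal is correct and follows essentially the same route as the paper: verify that the right-hand sides of the $(\alpha,\beta,p_\alpha,p_\beta)$ equations \eqref{eqn9} vanish on the subspace $\alpha=\beta=p_\alpha=p_\beta=0$ (the pairwise cancellation of the primary-attraction terms), check that the restricted equations are exactly the Hamiltonian system for $H_{xy}=H_{COM}\vert_N$, and transport back through the linear symplectic change \eqref{eqn7} with $\nu=0$. Your chain-rule/symplectic-submanifold justification simply makes explicit what the paper calls ``straight-forward to verify,'' so there is no substantive difference.
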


\begin{remark} The form of the Hamiltonian $H_{xy}$ in \eqref{eqn10} is similar to that in \cite{Llibre2021}. The benefit of the COM reduction is that we may apply analysis typical of restricted problems, despite that the binary asteroid problem is not a restricted problem.
\end{remark}

\section{Relative Equilibria}
We determine which of the relative equilibria in the planar Four-Body Problem \cite{Simo} the COM reduced Binary Asteroid Problem inherits. By an analysis of the amended potential associated to the COM reduced Binary Asteroid Problem, we prove there are exactly six relative equilibria of $H_{xy}$. Four of these six relative equilibria determine two distinct symmetric collinear configurations of the four bodies. The remaining two relative equilibria determine one doubly-symmetric convex kite configuration of the four bodies, which is a rhombus. A stability analysis shows that one of the two symmetric collinear configurations is always a saddle center, independent of the relationship between $m$ and $M$, while the remaining symmetric collinear configuration is a saddle center when $m \ll M$. A stability analysis of the doubly symmetric convex kite configuration shows that it is hyperbolic when $m \ll M$.

\subsection{Amended Potential; Symmetries}
The Hamiltonian $H_{xy}$ has three parts: the kinetic term,
\[ K(p_x,p_y) = \frac{1}{4m}\left( p_x^2 + p_y^2\right),\]
the Coriolis term,
\[ \Omega(x,y,p_x,p_y) = - \frac{2\pi}{P}(xp_y - yp_x) = \frac{2\pi}{P}(yp_x - xp_y),\]
and the potential term
\[ U(x,y) = \frac{m^2}{2\sqrt{x^2+y^2}} + \frac{mM}{ \sqrt{ (x + a/2)^2 + y^2}} + \frac{mM}{\sqrt{ (x - a/2)^2 + y^2}}.\]
Thus we can write
\[ H_{xy} = K(p_x,p_y) + \Omega(x,y,p_x,p_y) - U(x,y).\]
The Hamiltonian equations associated with $H_{xy}$ then take the form
\begin{equation}\label{eqn:hameqn}
\begin{aligned} 
\dot x &  = \frac{\partial K}{\partial p_x} + \frac{\partial \Omega}{\partial p_x} = \frac{p_x}{2m} + \frac{2\pi y}{P}, \\
\dot y & = \frac{\partial K}{\partial p_y} + \frac{\partial \Omega}{\partial p_y} = \frac{p_y}{2m} - \frac{2\pi x}{P}, \\
\dot p_x & = -\frac{\partial \Omega}{\partial x} + \frac{\partial U}{\partial x} = \frac{2\pi p_y}{P} + \frac{\partial U}{\partial x}, \\
\dot p_y & = -\frac{\partial \Omega}{\partial y} + \frac{\partial U}{\partial y} = -\frac{2\pi p_x}{P} + \frac{\partial U}{\partial y}.
\end{aligned}
\end{equation}
The Newtonian equations of the motion (the system of second-order ODEs) is obtained by eliminating the momenta from the first order system of ODEs in \eqref{eqn:hameqn}. For this we have
\begin{equation}\label{eqn:secondordersystemA}
\begin{aligned}
\ddot x & = \frac{4\pi\dot y}{P} + \frac{4\pi^2 x}{P^2} + \frac{1}{2m} \frac{\partial U}{\partial x}, \\
\ddot y & = - \frac{4\pi \dot x}{P} + \frac{4\pi^2 y}{P^2} + \frac{1}{2m} \frac{\partial U}{\partial y}.
\end{aligned}
\end{equation}

Define the amended potential
\begin{equation}\label{eqn:UandV}
V(x,y) = \frac{2\pi^2}{P^2}(x^2 + y^2) + \frac{1}{2m}U(x,y).
\end{equation}
The second-order system of ODEs in \eqref{eqn:secondordersystemA} have the form
\begin{equation}\label{eqn:secondordersystem}
\begin{aligned}
\ddot x - \frac{4\pi \dot y}{P} & = \frac{4\pi^2 x}{P^2} + \frac{1}{2m} \frac{\partial U}{\partial x} = \frac{\partial V}{\partial x}, \\
\ddot y + \frac{4\pi \dot x}{P} & =  \frac{4\pi^2 y}{P^2} + \frac{1}{2m} \frac{\partial U}{\partial y} = \frac{\partial V}{\partial y}.
\end{aligned}
\end{equation}
The Jacobi integral for the second-order system \eqref{eqn:secondordersystem} is
\[ C = 2V - (\dot x^2 + \dot y^2).\]
Since by Kepler's Third Law \eqref{eqn3} there holds
\[ \frac{2\pi^2}{P^2} = \frac{1}{2}\left( \frac{M}{a^3}\right),\]
the amended potential becomes
\begin{align*}
V(x,y) & = \frac{M}{2a^3}(x^2+y^2) + \frac{m}{4\sqrt{x^2+y^2}} \\ 
& \ \ \ \ + \frac{M}{ 2\sqrt{ (x + a/2)^2 + y^2}} +  \frac{M}{2 \sqrt{ (x - a/2)^2 + y^2}}.
\end{align*}
The amended potential is expressed solely in terms of the three parameters $m$, $M$, and $a$. The amended potential $V$ has the symmetries
\begin{equation}\label{Vsymmetries}
V\circ R_1  = V,\ V\circ R_2 = V,
\end{equation}
for the involutions
\[ R_1(x,y) = (-x,y),\ R_2(x,y) = (x,-y).\]
See Figure 1 for a graph of amended potential. See Figure 2 for the zero velocity curves $V=C/2$ associated with the amended potential. For both figures the value of $m$ is exaggerated, possibly beyond $m\ll M$, to overcome the graphing resolution issue associated with the singularity of $V$ at the origin.

\begin{figure}[ht]%
\centering
\includegraphics[width=0.7\textwidth]{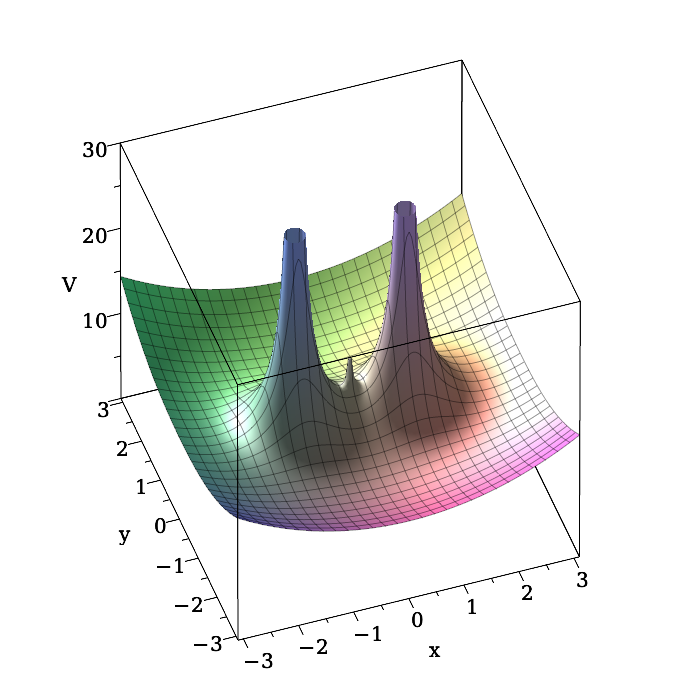}
\caption{Graph of amended potential function $V$ when $m=0.5$, $M=10$, and $a=2$.}\label{fig0A}
\end{figure}

\begin{figure}[ht]%
\centering
\includegraphics[width=0.7\textwidth]{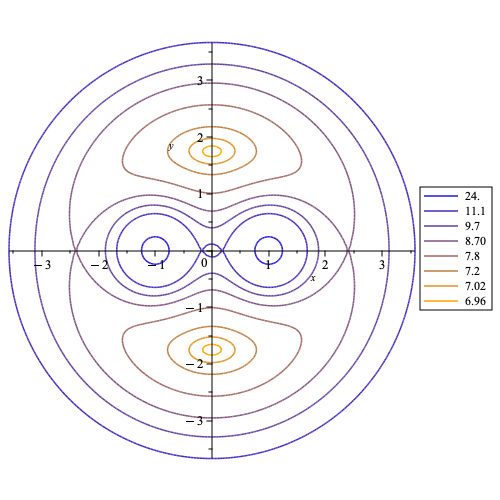}
\caption{Zero velocity curves $V = C/2$ when $m=0.5$, $M=10$, and $a=2$, where in the legend on the right is a selection of values of $C/2.$}\label{fig0B}
\end{figure}

The Hamiltonian $H_{xy}$ has the symmetries
\[ H_{xy}\circ \hat R_1 = H_{xy},\ H_{xy}\circ \hat R_2 = H_{xy},\]
for the involutions
\[ \hat R_1(x,y,p_x,p_y) = (-x,y,p_x,-p_y),\ \hat R_2(x,y,p_x,p_y) = (x,-y,-p_x,p_y),\]
where $\hat R_i$ is an extension of $R_i$, $i=1,2$. Associated to each $\hat R_i$ is a time-reversing symmetry: if $(x(t),y(t),p_x(t),p_y(t))$ is a solution of system of equations \eqref{eqn:hameqn} then so is
\begin{equation}\label{timereversingsymmetry}
\hat R_i(x(-t),y(-t),p_x(-t),p_y(-t)),\ i=1,2.
\end{equation}

\subsection{Critical Points}
Equilibria of the second-order system in \eqref{eqn:secondordersystem} are given by the critical points of the amended potential $V$. The partial derivatives of $V$ are
\begin{equation}\label{eqn:partialV}
\begin{aligned}
\frac{\partial V}{\partial x}
& = \frac{Mx}{a^3} - \frac{mx}{4(x^2+y^2)^{3/2}} - \frac{M(x+a/2)}{2((x+a/2)^2 + y^2)^{3/2}} \\
& \ \ \ \ - \frac{ M(x-a/2)}{ 2 ((x-a/2)^2 + y^2)^{3/2}}, \\
\frac{\partial V}{\partial y} 
& = y\bigg[ \frac{M}{a^3} - \frac{m}{4(x^2+y^2)^{3/2}} - \frac{M}{2((x+a/2)^2 + y^2)^{3/2}} \\
&\ \ \ \ \ \  - \frac{M}{2((x-a/2)^2 + y^2)^{3/2}}\bigg].
\end{aligned}
\end{equation}
Setting $m=0$, $M=1/2$, and $a=1$ in \eqref{eqn:partialV} gives precisely the equations that determine the five equilibria in the Circular Restricted Three-Body Problem with equal primary masses, otherwise known as the classical Copenhagen Problem \cite{meyerOffin,Szebehely}. For $m=0$, $M>0$, and $a>0$ in \eqref{eqn:partialV}, we refer to the five critical points
\[ ({\mathcal L}_1,0), (\pm{\mathcal L}_2,0),(0,\pm{\mathcal L}_4)\]
of $V$ as the equilibria of the Generalized Copenhagen Problem. Here ${\mathcal L}_1 =0$, ${\mathcal L}_2 > a/2$, and ${\mathcal L}_4 = \sqrt 3 a /2$. We will prove the existence of six equilibria
\[ (\pm L_1,0), (\pm L_2,0),(0,\pm L_4)\]
of the binary asteroid problem ($m>0$). The four equilibria $(\pm L_1,0)$ and $(\pm L_2,0)$ form collinear symmetric configurations and the two configurations $(0,\pm L_4)$ form a doubly symmetric convex kite (and rhombus) configuration. We relate the six equilibria of the binary asteroid problem to the five equilibria of the Generalized Copenhagen Problem as $m\to 0$.

The partial derivatives of $V$ imply there are two cases in which to search for critical points. Setting the partial derivative $\partial V/\partial y$ to zero gives $y=0$ and
\begin{equation}\label{eqn:secondVy}
0 = \frac{M}{a^3} - \frac{m}{4(x^2+y^2)^{3/2}} - \frac{M}{2((x+a/2)^2 + y^2)^{3/2}} - \frac{M}{2((x-a/2)^2 + y^2)^{3/2}}.
\end{equation}
Substitution of $y=0$ into $\partial V/\partial x = 0$ gives the first case,
\begin{equation}\label{eqn:firstcase}
 0 = \frac{Mx}{a^3} - \frac{mx}{4\vert x\vert^3} - \frac{M(x+a/2)}{2\vert x+a/2\vert^3}-\frac{M(x-a/2)}{2\vert x-a/2\vert^3}.
\end{equation}
To obtain the second case, we multiply \eqref{eqn:secondVy} by $-x$ and add it to $\partial V/\partial x = 0$. This gives an equation that implies $x=0$. Substitution of $x=0$ into \eqref{eqn:secondVy} gives the second case,
\begin{equation}\label{eqn:secondcase}
0 = \frac{M}{a^3} - \frac{m}{4\vert y\vert^3} - \frac{M}{((a/2)^2 + y^2)^{3/2}}.
\end{equation}

Further restrictions on the domains in the two cases follow from the symmetries of $V$ given in \eqref{Vsymmetries} . %Since
%\[ V(-x,y) = V(x,y) {\rm\ and\ }V(x,-y) = V(x,y),\]
We need only search for critical points in first case on $x>0$, and in the second case on $y>0$.

\begin{theorem}\label{ExistenceCritPts} There exist $6$ critical points $(\pm L_1,0)$, $(\pm L_2,0)$, and $(0,\pm L_4)$ of $V$ that are smoothly dependent on $(m,M,a)\in ({\mathbb R}^+)^3$ where $L_1\in(0,a/2)$, $L_2\in(a/2,\infty)$, and $L_4\in(\sqrt 3 a/2,\infty)$. For fixed $(M,a)\in( {\mathbb R}^+)^2$ there holds $L_1\to {\mathcal L}_1$, $L_2\to {\mathcal L}_2$, and $L_4\to {\mathcal L_4} = \sqrt 3 a /2$ as $m\to 0^+$, and $L_4$ is an increasing function of $m$ with $L_4 \to \infty$ as $m\to\infty$. Furthermore, for fixed $(M,a)\in( {\mathbb R}^+)^2$, the expansion of $L_4$ in terms of $m$ near $m=0$ is
\[ L_4(m) = \frac{ \sqrt 3 a}{2} + \frac{4a}{27 M}m + O(m^2).\]
\end{theorem}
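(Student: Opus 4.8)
The plan is to treat the two cases separately, since each critical point is pinned down by a single scalar equation. For the collinear points I would work with \eqref{eqn:firstcase}. On $x>0$, note that $x+a/2>0$ always, while $x-a/2$ changes sign at $x=a/2$. So I would split into $0<x<a/2$ and $x>a/2$. On $(0,a/2)$, \eqref{eqn:firstcase} becomes
\[
f_1(x) := \frac{Mx}{a^3} - \frac{m}{4x^2} - \frac{M}{2(x+a/2)^2} + \frac{M}{2(a/2-x)^2} = 0.
\]
As $x\to 0^+$ the $-m/(4x^2)$ term dominates and $f_1\to -\infty$; as $x\to (a/2)^-$ the term $+M/(2(a/2-x)^2)\to+\infty$, so $f_1\to+\infty$. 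Hence a root $L_1\in(0,a/2)$ exists by the intermediate value theorem. For uniqueness and smooth dependence I would show $f_1$ is strictly increasing on $(0,a/2)$: each of the four terms $Mx/a^3$, $-m/(4x^2)$, $-M/(2(x+a/2)^2)$, $M/(2(a/2-x)^2)$ has nonnegative (in fact positive) derivative there, so $f_1'>0$, giving a unique root, and then $\partial f_1/\partial x\neq 0$ lets the implicit function theorem produce a smooth $L_1(m,M,a)$ on $({\mathbb R}^+)^3$. The same template handles $x>a/2$: there \eqref{eqn:firstcase} reads $g_1(x):= Mx/a^3 - m/(4x^2) - M/(2(x+a/2)^2) - M/(2(x-a/2)^2)$, which goes to $-\infty$ as $x\to (a/2)^+$ and to $+\infty$ as $x\to\infty$, and one checks $g_1$ is increasing for $x$ large enough — the cleanest argument is that $g_1$ has at most the number of sign changes allowed by writing everything over a common denominator (a polynomial of controlled degree), which combined with the boundary behavior forces exactly one root $L_2\in(a/2,\infty)$; again IFT gives smoothness. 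The symmetry $R_1$ then yields the mirror points $(-L_1,0)$ and $(-L_2,0)$.

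For the kite points I would analyze \eqref{eqn:secondcase} on $y>0$:
\[
h(y) := \frac{M}{a^3} - \frac{m}{4y^3} - \frac{M}{(a^2/4 + y^2)^{3/2}} = 0.
\]
As $y\to 0^+$, $h\to -\infty$ (dominated by $-m/(4y^3)$); as $y\to\infty$, the last term vanishes and $h\to M/a^3>0$. So a root $L_4$ exists. Monotonicity is immediate: $-m/(4y^3)$ and $-M/(a^2/4+y^2)^{3/2}$ are both strictly increasing in $y>0$, so $h'>0$, giving a unique $L_4\in(0,\infty)$ and, via IFT, smooth dependence on $(m,M,a)$. That $L_4>\sqrt3\,a/2$ follows by evaluating $h$ at $y=\sqrt3\,a/2$: there $a^2/4+y^2 = a^2$, so $h(\sqrt3 a/2) = M/a^3 - m/(4\cdot 3\sqrt3 a^3/8) - M/a^3 = -\,2m/(3\sqrt3\,a^3)<0$, hence the root lies to the right. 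Differentiating the defining identity $h(L_4,m)=0$ with respect to $m$ and using $h_m = -1/(4y^3)<0$, $h_y>0$ gives $dL_4/dm = -h_m/h_y>0$, so $L_4$ is increasing in $m$; and as $m\to\infty$ the balance in $h(L_4)=0$ forces $m/(4L_4^3)$ to stay bounded (it equals $M/a^3 - M/(a^2/4+L_4^2)^{3/2}$, which is bounded by $M/a^3$), so $L_4^3\gtrsim m$, i.e. $L_4\to\infty$.

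The limits as $m\to 0^+$ and the expansion of $L_4$ come from the same implicit-function setup. Setting $m=0$ in \eqref{eqn:firstcase} and \eqref{eqn:secondcase} recovers exactly the Generalized Copenhagen equations, whose solutions are ${\mathcal L}_1=0$, ${\mathcal L}_2>a/2$, ${\mathcal L}_4=\sqrt3\,a/2$; since the roots depend continuously (indeed smoothly) on $m$, $L_j\to{\mathcal L}_j$. For the expansion, write $L_4(m) = \sqrt3\,a/2 + c_1 m + O(m^2)$, substitute into $h(L_4(m),m)=0$, Taylor-expand in $m$, and collect the $O(m)$ terms. At leading order one needs $h_y$ at $(y,m)=(\sqrt3 a/2,0)$: a short computation gives $h_y = 3m/(4y^4) + 3My/(a^2/4+y^2)^{5/2}$, which at $m=0$, $y=\sqrt3 a/2$, $a^2/4+y^2=a^2$ equals $3M(\sqrt3 a/2)/a^5 = 3\sqrt3 M/(2a^4)$; and $h_m = -1/(4y^3) = -1/(4\cdot 3\sqrt3 a^3/8) = -2/(3\sqrt3 a^3) = -1/(3\sqrt3 a^3/2)$. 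Then $c_1 = -h_m/h_y = \bigl(2/(3\sqrt3 a^3)\bigr)\big/\bigl(3\sqrt3 M/(2a^4)\bigr) = 4a/(27M)$, which is exactly the claimed coefficient. The main obstacle I anticipate is the uniqueness claim on $x>a/2$ in the first case: monotonicity of $g_1$ is not obvious on the whole interval (the two attracting terms $-m/(4x^2)$ and $-M/(2(x-a/2)^2)$ push it up while $-M/(2(x+a/2)^2)$ and the linear term have mixed effect), so one must either do the careful sign-of-derivative bookkeeping or pass to the common-denominator polynomial and bound its number of positive roots — I would present the polynomial/Descartes-style count as the safest route, together with the two boundary limits, to conclude there is exactly one.
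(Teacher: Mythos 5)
Your route is essentially the paper's: reduce to the scalar equations \eqref{eqn:firstcase} and \eqref{eqn:secondcase}, get existence from the boundary limits and the intermediate value theorem, uniqueness from strict monotonicity, smooth dependence from the implicit function theorem, and the expansion of $L_4$ by implicit differentiation (your values $h_y(\sqrt3 a/2,0)=3\sqrt3 M/(2a^4)$, $h_m=-2/(3\sqrt3 a^3)$, $c_1=4a/(27M)$ agree with the paper). The ``main obstacle'' you anticipate on $(a/2,\infty)$ is not an obstacle: differentiating $g_1$ term by term gives $g_1'(x)=M/a^3+m/(2x^3)+M/(x+a/2)^3+M/(x-a/2)^3>0$ on all of $(a/2,\infty)$, since every term of the form $-c/(x-s)^2$ with $x>s$ has positive derivative; so $g_1$ is strictly increasing globally and the common-denominator/Descartes detour is unnecessary. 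This positivity is exactly the paper's convexity statement \eqref{eqn:Vxx} (the paper phrases both collinear cases as coercivity plus convexity of $x\mapsto V(x,0)$, which is the same computation). Your treatment of $L_4$ is in places cleaner than the paper's: evaluating $h(\sqrt3 a/2)=-2m/(3\sqrt3 a^3)<0$ gives $L_4>\sqrt3 a/2$ for every $m>0$ at once, the sign of $dL_4/dm=-h_m/h_y>0$ replaces the paper's graphical monotonicity argument, and the bound $m/(4L_4^3)\le M/a^3$ gives $L_4\to\infty$ quantitatively.

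The one genuine gap is the limit $L_1\to\mathcal{L}_1=0$. You dispatch all three limits with ``the roots depend continuously (indeed smoothly) on $m$,'' but for $L_1$ the limit point is $x=0$, where the defining equation is singular (the term $-m/(4x^2)$, hence $f_1$ itself, is undefined at $x=0$), so the implicit function theorem cannot be applied at $(x,m)=(0,0)$, and smooth dependence of $L_1$ on $m>0$ does not by itself yield the boundary limit. The paper flags precisely this and gives a separate argument: write $\partial V/\partial x(x,0)=f(x)-m/(4x^2)$ with $f$ positive and increasing on $(0,a/2)$ and $f(0^+)=0$, so the unique intersection with the decreasing function $m/(4x^2)$ slides to the origin as $m\to 0^+$. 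A quick patch inside your framework: fix $\epsilon\in(0,a/2)$; since $f_1(\epsilon,m)=f(\epsilon)-m/(4\epsilon^2)$ with $f(\epsilon)>0$, you have $f_1(\epsilon,m)>0$ for all small $m$, while $f_1(x,m)\to-\infty$ as $x\to 0^+$, so strict monotonicity of $f_1(\cdot,m)$ places $L_1(m)\in(0,\epsilon)$, giving $L_1\to 0$. For $L_2$ and $L_4$ your continuity claim is fine as stated, because there the implicit function theorem does apply at the limit points $(\mathcal{L}_2,0)$ and $(\sqrt3 a/2,0)$ and its local uniqueness identifies the branch with your unique root, which is how the paper argues.
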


\begin{proof} We start with the first case of critical points of the form $(x,0)$, $x>0$. Setting $y=0$, the amended potential is
\[ x \mapsto V(x,0) = \frac{M}{2a^3}x^2 + \frac{m}{4 x } + \frac{M}{2(x + a/2)} + \frac{M}{2\vert x - a/2\vert},\ x>0.\]
We use coercivity and convexity of $x\mapsto V(x,0)$, as is done in the Circular Restricted Three-Body Problem, to show that there is a unique critical point of $V(x,0)$ in each of the intervals
\[ (0,a/2), (a/2,\infty).\]
The limits
\begin{align*}
& \lim_{x\to 0^+} V(x,0)  = \infty,\ 
\lim_{x\to {a/2}^-} V(x,0)  = \infty, \\
& \lim_{x\to {a/2}^+} V(x,0)  = \infty, \
\lim_{x\to\infty} V(x,0)  = \infty,
\end{align*}
imply the coercivity of $V(x,0)$ on the intervals $(0,a/2)$ and $(a/2,\infty)$.
The second derivative of $V$ with respect to $x$ evaluated at $(x,0)$ for $x\in (0,a/2) \cup (a/2,\infty)$ is
\begin{equation}\label{eqn:Vxx}
\frac{\partial^2 V}{\partial x^2}(x,0) = \frac{M}{a^3} + \frac{m}{2 x^3} + \frac{M}{(x+a/2)^3} + \frac{M}{\vert x-a/2\vert^3} > 0.
\end{equation}
Thus $x\mapsto V(x,0)$ is convex on each of the intervals $(0,a/2)$ and $(a/2,\infty)$. The unique critical point $L_1\in (0,a/2)$ of $x\mapsto V(x,0)$ satisfies \eqref{eqn:firstcase}, i.e.,
\begin{equation}\label{eqn:collinear1}
 0 = \frac{\partial V}{\partial x}(L_1,0) = \frac{M L_1}{a^3} - \frac{m}{4 L_1^2} - \frac{M}{2(L_1 + a/2)^2} + \frac{M}{2(L_1 - a/2)^2},
 \end{equation}
and the unique critical point $L_2\in (a/2,\infty)$ of $x\mapsto V(x,0)$ satisfies \eqref{eqn:firstcase}, i.e,
\begin{equation}\label{eqn:collinear2}
0 = \frac{\partial V}{\partial x}(L_2,0)  =  \frac{M L_2}{a^3} - \frac{m}{4 L_2^2} - \frac{M}{2(L_2 + a/2)^2} - \frac{M}{2(L_2 - a/2)^2}.
\end{equation}

We show that $L_1$ is a smooth function of $(m,M,a)\in({\mathbb R}^+)^3$, and for each fixed $M>0$ and $a>0$, as $m\to 0$, that $L_1$ approaches the equilibrium ${\mathcal L}_1=0$ of the Generalized Copenhagen Problem. For $x\in(0,a/2)$ set
\[ F(x,m,M,a) = \frac{\partial V}{\partial x}(x,0) = \frac{Mx}{a^3} - \frac{m}{4x^2} - \frac{M}{2(x+a/2)^2} + \frac{M}{2(x-a/2)^2}.\]
Since $F(L_1,m,M,a) = 0$ by \eqref{eqn:collinear1} and
\[ \frac{\partial F}{\partial x} (L_1,0) = \frac{\partial^2 V}{\partial x^2}(L_1,0) > 0,\]
it follows by the Implicit Function Theorem that $L_1$ is a smooth function of $(m,M,a)$ on $({\mathbb R}^+)^3$. Fixing the values of $M>0$ and $a>0$ shows that $L_1$ is a smooth function $m$ on the interval $m>0$. We cannot apply the Implicit Function Theorem to $F$ because it is not defined when $x=0$. Instead we split $F$, i.e., $\partial V/\partial x(x,0)$, into two functions,
\[ f(x,M,a) = \frac{Mx}{a^3} - \frac{M}{2(x+a/2)^2} + \frac{M}{2(x-a/2)^2}\]
and
\[ g(x,m) = \frac{m}{4x^2},\]
so that
\[ F(x,m,M,a) = f(x,M,a) - g(x,m).\]
For fixed $M>0$ and $a>0$, the function $f$ has the limits
\[ \lim_{x\to 0^+} f(x,M,a) = 0, \ \lim_{x\to a/2^-} f(x,M,a) = \infty.\]
Since
\[ \frac{\partial f}{\partial x} = \frac{M}{a^3} + \frac{M}{(x + a/2)^3 }- \frac{M}{(x-a/2)^3}\]
is positive on the interval $x\in(0,a/2)$, it follows that the function $f$ is positive and increasing on $x\in (0,a/2)$. The function $g$ is positive and decreasing on $x\in(0,a/2)$. The graphs of $f$ and $g$ intersect at precisely one point, whose $x$ value is $L_1$. As $m\to 0^+$ the graph of $f$ remains fixed while the graph of $g$ shifts the unique intersection of the graphs of $f$ and $g$ towards the origin, i.e., $L_1\to 0^+$. Setting $m=0$ in $F$ gives
\[ F(x,0,M,a) = \frac{Mx}{a^3} - \frac{M}{2(x+a/2)^2} + \frac{M}{2(x-a/2)^2},\]
a solution of which is $x=0$, the equilibrium  ${\mathcal L}_1$ of the Generalized Copenhagen Problem. Thus, the equilibrium $L_1$ converges to the equilibrium ${\mathcal L}_1$ of the Generalized Copenhagen Problem when $m\to 0^+$.

We show that $L_2$ is a smooth function of $(m,M,a)\in({\mathbb R}^+)^3$, and for fixed $M>0$ and $a>0$, that as $m\to 0^+$, $L_2$ approaches the equilibrium ${\mathcal L}_2$ of the Generalized Copenhagen Problem. For $x\in(a/2,\infty)$ now set
\[ F(x,m,M,a) =  \frac{\partial V}{\partial x}(x,0) = \frac{Mx}{a^3} - \frac{m}{4x^2} - \frac{M}{2(x+a/2)^2} - \frac{M}{2(x-a/2)^2}.\]
Since $F(L_2,m,M,a) = 0$ from \eqref{eqn:collinear2} and
\[ \frac{\partial F}{\partial x} (L_2,0) = \frac{\partial^2 V}{\partial x^2}(L_2,0) > 0\]
it follows from the Implicit Function Theorem that $L_2$ is a smooth function of $(m,M,a)$ on $({\mathbb R}^+)^3$. Fixing the values of $M>0$ and $a>0$ shows that $L_2$ is a smooth function of $m$ on the interval $m>0$. Setting $m=0$ in $F(x,m,M,a) = 0$ gives
\[ 0 = \frac{Mx}{a^3}  - \frac{M}{2(x+a/2)^2} - \frac{M}{2(x-a/2)^2}\]
for which the unique solution $x>a/2$ is the equilibrium ${\mathcal L}_2$ of the generalized Copenhagen problem. Since
\[ \frac{\partial F}{\partial x}({\mathcal L}_2,0,M,a) = \frac{M}{a^3} + \frac{M}{( {\mathcal L}_2 + a/2)^3} + \frac{M}{({\mathcal L}_2 - a/2)^3} > 0,\]
the uniqueness part of the Implicit Function Theorem implies that ${\mathcal L}_2$ is the limit of $L_2$ as $m\to 0^+$.

Now we treat the second case of the critical point of the form $(0,y)$, $y>0$. We show that $L_4$ uniquely exists and smoothly depends on $(m,M,a)\in ({\mathbb R}^+)^3$. Using equation \eqref{eqn:secondcase} set 
\[ f(y,m) = \frac{m}{4y^3} {\rm\ and\ } g(y,M,a) = \frac{M}{a^3} - \frac{M}{((a/2)^2 + y^2)^{3/2}},\]
and define
\[ F(y,m,M,a) = g(y) - f(y).\]
The function $f$ is positive with limit $\infty$ as $x\to 0^+$ and limit $0$ as $x\to\infty$, and is decreasing. The function $g$ has limits
\[ \lim_{y\to 0^+} g(y) -\frac{7M}{a^3}<0 {\rm\ and \ }
\lim_{y\to \infty} g(y) = \frac{M}{a^3} >0,\]
and is increasing because
\[ g^\prime(y) = \frac{3My}{((a/2)^2 + y^2)^{5/2}} >0.\]
For each choice of $(m,M,a)\in ({\mathbb R}^+)^3$, the graphs of $f$ and $g$ intersect in a unique point with $y = L_4>0$ which satisfies $F(L_4,m,M,a) = 0$. Since 
\[ \frac{\partial F}{\partial y}(L_4,m,M,a) = \frac{3m}{4L_4^4} + \frac{3M L_4}{((a/2)^2 + L_4^2)^{5/2}} > 0\]
it follows by the Implicit Function Theorem that $L_4$ is smooth function of $(m,M,a) \in ({\mathbb R}^+)^3$. Fixing $M>0$ and $a>0$ shows that $L_4$ is smooth function of $m$ on the interval $m>0$. 

For $M>0$ and $a>0$ fixed, the graph of $g$ is fixed, while the graph of $f$, which is independent of $M$ and $a$, shifts the point of intersection between the graphs of $f$ and $g$ to the right as $m$ increases. Thus for fixed $M>0$ and $a>0$ the value of $L_4$ increases with increasing $m$. The limit of $g$ is positive as $y\to\infty$ while the limit of $f$ is $0$ as $m\to\infty$. This implies that $L_4\to\infty$ as $m\to\infty$.

For fixed $M>0$ and $a>0$, and with $m>0$ and close to $0$, we show that the unique solution $y=L_4$ of \eqref{eqn:secondcase} satisfies $L_4>\sqrt 3 a/2$ and has limit $\sqrt 3 a/2$ as $m\to 0^+$. For $y>0$ we use Equation \eqref{eqn:secondcase} to define
\begin{equation}\label{eqn:kite}
F(y,m,M,a) = \frac{M}{a^3} - \frac{m}{4y^3}  - \frac{M}{((a/2)^2 + y^2)^{3/2}} = 0.
\end{equation}
The equation $F(y,0,M,a) = 0$ has the solution
\[ y = \frac{\sqrt 3 a}{2}\]
which corresponds to the equilateral triangle equilibrium ${\mathcal L}_4$ in the generalized Copenhagen problem. Because
\[
\frac{\partial F}{\partial y}\left( \frac{\sqrt 3 a}{2},0,M,a\right)
= \frac{3\sqrt 3 M}{2 a^4} > 0,\]
the Implicit Function Theorem implies there exists a smooth function $L_4$ defined on an open neighbourhood $O$ of $\{(0,M,a):M>0,a>0\}\subset {\mathbb R}^3$ such that $L_4(0,M,a) = \sqrt 3 a/2$ and $F(L_4(m,M,a),m,M,a) = 0$ for all $(m,M,a)\in O$. In particular, for fixed $M>0$ and $a>0$, there is $\epsilon>0$ (depending on $M$ and $a$) and a smooth function $L_4(m) = L_4(m,M,a)$ such that $L_4(0)=L_4(0,M,a) = \sqrt 3 a/2$ and $F(L_4(m),m,M,a) = 0$ for all $m\in (-\epsilon,\epsilon)$. Differentiation of $F(L_4(m),m,M,a) = 0$ with respect to $m$ followed by evaluation at $m=0$ gives
\[
\frac{dL_4}{dm}(0) =\frac{4a}{27M}.
\]
This gives the Taylor series expansion of $L_4$ in $m$ about $m=0$ as
\begin{equation}\label{eqn:L4expansion}
L_4(m) = \frac{\sqrt 3 a}{2} + \frac{4a}{27 M} m + O(m^2).
\end{equation}
Thus, for $m>0$ and close to $0$ the unique critical point $L_4(m)>\sqrt 3 a/2$ satisfies $L_4\to {\mathcal L}_4 = \sqrt 3 a /2$ as $m\to 0^+$.
\end{proof}

\begin{remark}
Even though the six equilibria $(\pm L_1,0)$, $(\pm L_2,0)$, and $(0,\pm L_4)$ exist for all $(m,M,a)\in ({\mathbb R}^+)^3$, they only are physically plausible for the model when $m \ll M$. In particular, Theorem \ref{eqn:L4expansion} states for fixed $M>0$ and $a>0$ that the pair $(0,\pm L_4)$ has the property that $L_4\to \infty$ as $m\to \infty$. This implies in the case of the binary asteroids having the same mass as the primaries, $m=M$, the configuration of the four bodies is not a square. This stands in contrast to the known square configuration in the equal mass Four-Body Problem \cite{Roy}.
\end{remark}

\begin{remark}
The expansion of $L_4(m)$ in \eqref{eqn:L4expansion} is similar to, but differs from that in Theorem 2 part (c) in \cite{Corbera2014}. The difference is that here $L_4(m)$ increases from $L_4(0)$ as $m$ increases from $0$, whereas in \cite{Corbera2014} the equivalent value of their $L_4(m)$ decreases from their $L_4(0)$ as $m$ increases from $0$ (cf Case A2 in \cite{Roy} when $m$ is small).
\end{remark}

Although Theorem \ref{ExistenceCritPts} gives six critical points, hence six equilibria of $H_{xy}$, there are only $3$ distinct configurations of the four bodies associated to the six equilibria. The primaries have the same mass $M$ and are located at the $R_1$-symmetric positions $(\pm a/2,0)$. The COM coordinates imply that placing the asteroid with mass $m_1=m$ at $(x,y)$ means placing the asteroid with mass $m_2=m$ at $(-x,-y)$. The six equilibria  $(\pm L_1,0)$, $(\pm L_2,0)$, and $(0,\pm L_4)$ are in the form of three symmetric pairs, with $(\pm L_i,0)$, $i=1,2$, being $R_1$ symmetric, and $(0,\pm L_4)$ being $R_2$ symmetric. So there are three distinct choices for the placement of the two asteroids in equilibrium.

Placing the two asteroids at $(\pm L_1,0)$ gives the first of three distinct configurations of the four bodies. The masses $M/2$, $m$, $m$, $M/2$ are collinear and located at the $R_1$-symmetric positions $(-a/2,0)$, $(-L_1,0)$, $(L_1,0)$, $(a/2,0)$, where for fixed $M>0$ and $a>0$ the value of $L_1$ depends smoothly on $m>0$ by Theorem \ref{ExistenceCritPts} (cf Section 2 in \cite{Shoaib}). We call this collinear configuration {\it non-separated} because there is not a primary between the binary asteroids. It has the property that $L_1\to {\mathcal L}_1 = 0$ as $m\to 0$ by Theorem \ref{ExistenceCritPts} (cf Case C2 in \cite{Roy}). That the two asteroids in this collinear configuration are not separated means we could not use the Implicit Function Theorem to prove that $L_1\to {\mathcal L}_1$ as $m\to 0^+$.

Placing the two asteroids at $(\pm L_2,0)$ gives the second of the distinct configuration of the four bodies. The masses $m$, $M/2$, $M/2$, $m$ are collinear and located at the $R_1$-symmetric positions $(-L_2,0)$, $(-a/2,0)$, $(a/2,0)$, $(L_2,0)$, where for fixed $M>0$ and $a>0$ the value of $L_2$ depends smoothly on $m>0$ by Theorem \ref{ExistenceCritPts} (cf Section 2 in \cite{Shoaib}). We call this collinear configuration associated to $(\pm L_2,0)$ {\it separated} because there is at least one primary (in this case both) between the binary asteroids. It has the property that $L_2\to {\mathcal L}_2 \ne 0$ as $m\to 0$ by Theorem \ref{ExistenceCritPts} (cf Case C1 in \cite{Roy}). That the two asteroids are separated means we could use the Implicit Function Theorem to prove that $L_2\to {\mathcal L}_2$ as $m\to 0^+$.

Placing the two asteroids at $(0,\pm L_4)$ gives the third of three distinct configurations. The masses $M/2$, $m$, $M/2$, $m$ are located at the $R_1$-symmetric and $R_2$-symmetric positions $(-a/2,0)$, $(0,L_4)$, $(a/2,0)$, $(0,-L_4)$, i.e., it is doubly-symmetric. This configuration is a convex kite (cf \cite{Corbera2014}), is a rhombus (cf \cite{Long2002}), and has an axes of symmetry (cf \cite{AlvarezRamirez2013}).

\subsection{Linearization Matrix}
At a equilibrium, the linearization of the Hamiltonian system in \eqref{eqn:hameqn} is
\[ A = \begin{bmatrix} 
0 & \displaystyle \frac{2\pi}{P} & \displaystyle \frac{1}{2m} & 0 \vspace{0.05in} \\
- \displaystyle \frac{2\pi}{P} & 0 & 0 & \displaystyle \frac{1}{2m} \vspace{0.05in} \\
\displaystyle \frac{\partial^2 U}{\partial x^2} & \displaystyle \frac{\partial^2U}{\partial x\partial y} & 0 & \displaystyle \frac{2\pi}{P} \vspace{0.05in} \\ 
\displaystyle \frac{\partial^2 U}{\partial x\partial y} & \displaystyle \frac{\partial^2U}{\partial y^2} & -\displaystyle \frac{2\pi}{P} & 0
\end{bmatrix}.
\]
Using the relationship between the potential $U$ and the amended potential $V$ in \eqref{eqn:UandV}, using Kepler's Third Law in \eqref{eqn3}, and setting
\[ \omega^2 = \frac{M}{a^3},\ \alpha = \frac{\partial^2 V}{\partial x^2}, \ \beta = \frac{\partial^2 V}{\partial x\partial y},\ \gamma = \frac{\partial^2 V}{\partial y^2},\]
the linearization matrix becomes
\[ 
A=\begin{bmatrix}
0 & \omega & \displaystyle \frac{1}{2m} & 0 \vspace{0.05in} \\
- \omega & 0 & 0 & \displaystyle \frac{1}{2m} \vspace{0.05in} \\
 2m (\alpha -\omega^2)   & \displaystyle 2m \beta  & 0 & \omega \vspace{0.05in} \\ 
2m \beta &  2m( \gamma -\omega^2)  & -\omega & 0
\end{bmatrix}.
\]
The characteristic polynomial of $A$ is
\[ {\rm det}(zI - A) = z^4 + (4\omega^2 - \alpha - \gamma)z^2 + \alpha\gamma - \beta^2.\]
Setting $w=z^2$, the characteristic polynomial becomes
\[ w^2 +(4\omega^2 - \alpha - \gamma) w + \alpha\gamma - \beta^2.\]
An equilibrium is an saddle-center if $\alpha\gamma-\beta^2<0$, and an equilibrium is hyperbolic if the discriminant
\[ (4\omega^2 - \alpha-\gamma)^2 - 4(\alpha\gamma - \beta^2) < 0.\]

\subsection{Spectral Stability}
By the symmetry of $V$, given in \eqref{Vsymmetries}, we need only determine the eigenvalues of the linearizations for $(L_1,0)$, $(L_2,0)$, and $(0,L_4)$. 

\begin{theorem}\label{thmSpectralStability} 
Let $(m,M,a)\in ({\mathbb R}^+)^3$. The equilibrium $(L_1,0)$ is a saddle-center, the equilibrium $(L_2,0)$ satifies $L_2\in (a/2,3a/2)$ and is a saddle-center if $m\ll M$, and the equilibrium $(0,L_4)$ is hyperbolic if $m \ll M$.
\end{theorem}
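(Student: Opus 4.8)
The plan is to exploit the reflection symmetries $V\circ R_1=V$ and $V\circ R_2=V$ of \eqref{Vsymmetries} to make the mixed second partial $\beta$ vanish at every one of the six equilibria, and then to use the critical‑point equations of Section 3.2 to collapse $\alpha=\partial^2V/\partial x^2$ and $\gamma=\partial^2V/\partial y^2$ into expressions whose signs can be read off. By the $R_1$‑ and $R_2$‑symmetries it suffices to treat $(L_1,0)$, $(L_2,0)$, and $(0,L_4)$, the linearizations at $(-L_1,0)$, $(-L_2,0)$, $(0,-L_4)$ being conjugate to these.

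\textbf{Collinear equilibria.} Since $V$ is even in $y$, the mixed partial vanishes on the $x$‑axis, so $\beta=0$ at $(L_1,0)$ and $(L_2,0)$ and the saddle‑center criterion becomes just $\alpha\gamma<0$; and \eqref{eqn:Vxx} already gives $\alpha>0$ there, so everything reduces to the sign of $\gamma$. Writing $\partial V/\partial y=y\,B(x,y)$ with $B$ the bracketed factor in \eqref{eqn:partialV}, one has $\gamma(x,0)=B(x,0)$. I would divide the critical‑point equation \eqref{eqn:collinear1} (resp. \eqref{eqn:collinear2}) by $L_1$ (resp. $L_2$) and substitute to cancel the $M/a^3$ and $m/(4x^3)$ terms in $\gamma$; combining the remaining fractions over $L_i$ and $L_i\pm a/2$ gives the uniform formula
\[ \gamma(L_i,0)=\frac{Ma}{4L_i}\left[\frac{1}{(L_i+a/2)^3}-\frac{1}{|L_i-a/2|^3}\right],\qquad i=1,2. \]
Since $L_i>0$ implies $L_i+a/2>|L_i-a/2|>0$, the bracket is negative, so $\gamma<0$ and $\alpha\gamma-\beta^2<0$ at both points for all $(m,M,a)\in(\mathbb{R}^+)^3$; hence $(L_1,0)$ and $(L_2,0)$ are saddle‑centers. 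For the localization: $L_2>a/2$ by Theorem \ref{ExistenceCritPts}, and as $x\mapsto\partial V/\partial x(x,0)$ is increasing on $(a/2,\infty)$ by \eqref{eqn:Vxx} with unique zero $L_2$, the inequality $a^2\,\partial V/\partial x(3a/2,0)=7M/8-m/9>0$ — valid for $m<63M/8$, in particular for $m\ll M$ — forces $L_2<3a/2$.

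\textbf{Kite equilibrium.} Here $V$ is even in $x$, so $\beta=0$ at $(0,L_4)$ and the hyperbolicity criterion is $(4\omega^2-\alpha-\gamma)^2<4\alpha\gamma$. Put $d^2=(a/2)^2+L_4^2$. The relation \eqref{eqn:secondcase} says exactly $B(0,L_4)=0$, equivalently $M/a^3-m/(4L_4^3)=M/d^3$; using this in $\gamma(0,L_4)=L_4\,\partial_yB(0,L_4)$ and in the expansion of $\partial^2V/\partial x^2$ at $(0,L_4)$ reduces these to $\gamma(0,L_4)=3m/(4L_4^3)+3ML_4^2/d^5>0$ and $\alpha(0,L_4)=3Ma^2/(4d^5)>0$. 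At $m=0$ one has $L_4=\sqrt3\,a/2$ and $d=a$, whence $\alpha=3M/(4a^3)$, $\gamma=9M/(4a^3)$, $4\omega^2-\alpha-\gamma=M/a^3$, and $(4\omega^2-\alpha-\gamma)^2-4\alpha\gamma=M^2/a^6-27M^2/(4a^6)<0$: the equilibrium is hyperbolic at $m=0$. Since $L_4$ — hence $\alpha$ and $\gamma$ — is smooth in $m$ for fixed $(M,a)$ by Theorem \ref{ExistenceCritPts}, the strict inequality persists for all small $m$, so $(0,L_4)$ is hyperbolic when $m\ll M$.

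\textbf{Main obstacle.} None of the steps is deep, and the symmetries remove $\beta$ at no cost, so the bulk of the work is careful bookkeeping with the critical‑point equations. The one genuinely delicate point is the kite case: the discriminant $(4\omega^2-\alpha-\gamma)^2-4\alpha\gamma$ is awkward to sign as a function of $(m,M,a)$, so instead of attacking it directly I would first reduce $\alpha$ and $\gamma$ to the closed forms above, evaluate the discriminant in the degenerate, exactly solvable limit $m=0$ — where the four‑body configuration becomes the equilateral $L_4$ point of the Generalized Copenhagen Problem — and then use the smoothness in $m$ from Theorem \ref{ExistenceCritPts} to open up a neighborhood, which is precisely the force of the hypothesis $m\ll M$. (Incidentally the two collinear conclusions hold for every $(m,M,a)\in(\mathbb{R}^+)^3$; only the bound $L_2<3a/2$ needs $m$ small.)
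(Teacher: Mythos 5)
Your proposal is correct, and it follows the paper's overall strategy: symmetry (evenness of $V$ in the transverse variable) kills $\beta$ at all six equilibria, convexity \eqref{eqn:Vxx} gives $\alpha>0$ on the axis, the critical-point equations are substituted into $\gamma$, and the kite case is settled by signing the discriminant near $m=0$ using the simplified $\alpha=3Ma^2/(4d^5)$ and $\gamma=3m/(4L_4^3)+3ML_4^2/d^5$, which are exactly the paper's \eqref{eqn:alphaL4simplified} and \eqref{eqn:gammaL4simplified}. The genuine difference is in the collinear cases: by dividing \eqref{eqn:collinear1}--\eqref{eqn:collinear2} by $L_i$ and cancelling both the $M/a^3$ and $m/(4L_i^3)$ terms, you arrive at the uniform closed form $\gamma(L_i,0)=\frac{Ma}{4L_i}\bigl[(L_i+a/2)^{-3}-|L_i-a/2|^{-3}\bigr]$, which I checked is correct (it agrees with the paper's expression $\gamma=-\frac{m}{4L_i^2}\frac{d_1-L_i}{L_id_1}-\frac{M}{2d_1}\frac{a^3-d_2^3}{a^2d_2^3}$, and numerically). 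This buys you something: your formula is manifestly negative for all $(m,M,a)\in(\mathbb{R}^+)^3$, so $(L_2,0)$ is a saddle-center unconditionally, whereas the paper's decomposition only gives $\gamma<0$ at $L_2$ under the extra requirement $a>d_2$, i.e.\ $L_2<3a/2$, which is where the hypothesis $m\ll M$ enters its argument; in your version $m\ll M$ (in fact $m<63M/8$) is needed only for the localization $L_2\in(a/2,3a/2)$, proved the same way as in the paper via $a^2\,\partial V/\partial x(3a/2,0)=7M/8-m/9$ and monotonicity. For the kite, you evaluate the discriminant at $m=0$ (getting $-23M^2/(4a^6)<0$, the same leading constant as the paper's expansion $\frac{M}{a^6}[-\frac{23M}{4}+\frac{5\sqrt3\,m}{3}]+O(m^2)$) and invoke smoothness of $L_4$ in $m$ from Theorem \ref{ExistenceCritPts}; this continuity argument is equivalent in rigor to the paper's explicit first-order expansion, just less quantitative. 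No gaps.
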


\begin{proof} To show that $(L_1,0)$ and $(L_2,0)$ are saddle-centers we show that $\alpha\gamma -\beta^2<0$. Computing the sign of $\alpha$ and $\beta$ for $L_1$ and $L_2$ is relatively easy to do. By the convexity of $x\mapsto V(x,0)$, as given in \eqref{eqn:Vxx}, it follows that
\[ \alpha = \frac{\partial^2 V}{\partial x^2}(L_i,0) > 0,\ i=1,2.\]
Since 
\begin{equation}\label{eqn:partialVxy}
\frac{\partial^2 V}{\partial x\partial y} = \frac{3mxy}{4(x^2+y^2)^{5/2}} + \frac{3M(x+a/2)y}{2((x+a/2)^2+y^2)^{5/2}} + \frac{3M(x-a/2)y}{2((x-a/2)^2+y^2)^{5/2}},
\end{equation}
it follows that
\[ \beta = \frac{\partial^2 V}{\partial x\partial y}(L_i,0) = 0, \ i=1,2.\]

To compute the sign of $\gamma$ for $L_i$, $i=1,2$, we start with the expression
\begin{equation}\label{eqn:gamma}
  \gamma = \frac{\partial^2 V}{\partial y^2}(L_i,0)  
  = \frac{M}{a^3} - \frac{m}{4L_i^3} - \frac{M}{2d_1^3} - \frac{M}{2d_2^3},
\end{equation}
where
\[ d_1 = \vert L_i + a/2\vert {\rm\ and \ } d_2 = \vert L_i - a/2\vert,\ i=1,2.\]
When $i=1$ the values $d_1$ and $d_2$ satisfy $d_1 = L_1 + a/2$, $d_2  = a/2 - L_1$, and $d_1 + d_2 = a$, and equation \eqref{eqn:collinear1} that $L_1$ satisfies becomes
\begin{equation}\label{eqn:collinear1d}
0 = \frac{M L_1}{a^3} - \frac{m}{4 L_1^2} - \frac{M}{2d_1^2} + \frac{M}{2d_2^2}.
\end{equation}
When $i=1$ the values $d_1$ and $d_2$ satisfy $d_1 = L_2 + a/2$, $d_2 = L_2 - a/2$, and $d_1 - d_2 = a$, and equation \eqref{eqn:collinear2} that $L_2$ satisfies becomes
\begin{equation}\label{eqn:collinear2d}
0  =  \frac{M L_2}{a^3} - \frac{m}{4 L_2^2} - \frac{M}{2d_1^2} - \frac{M}{2d_2^2}.
\end{equation}
Using \eqref{eqn:collinear1d} and the properties of $d_1$ and $d_2$ for $L_1$ and then \eqref{eqn:collinear2d} and the properties of $d_1$ and $d_2$ for $L_2$, and some algebra in both cases, the expression for $\gamma$ at $L_i$ in \eqref{eqn:gamma} becomes
\[ \gamma = - \frac{m}{4 L_i^2}\left[\frac{d_1 - L_i}{L_i d_1}\right] - \frac{M}{2d_1}\left[ \frac{a^3 - d_2^3}{a^2d_2^3}\right]. \]
Since $d_1-L_i = a/2 >0$ for $i=1,2$, there holds
\[ - \frac{m}{4L_i^2}\left[\frac{d_1 - L_i}{L_i d_1}\right] < 0.\]
To determine the sign of the remaining term in $\gamma$ requires separate investigations for $L_1$ and $L_2$.

To determine the value of $\gamma$ for $L_1$, we note that $d_2 = -L_1 + a/2$ and $L_1\in (0,a/2)$. So $a>d_2$ which gives $a^3>d_2^3$ and hence
\[ - \frac{M}{2d_1}\left[ \frac{a^3 - d_2^3}{a^2d_2^3}\right] < 0.\]
Thus $\gamma<0$ for $L_1$ for all $m>0$ and all $M>0$.

To determine the value of $\gamma$ for $L_2$ we note that for
\[ - \frac{M}{2d_1}\left[ \frac{a^3 - d_2^3}{a^2d_2^3}\right] < 0\]
to hold requires that
\[ a^3 - d_2^3 > 0  \ \Leftrightarrow\ a > d_2.\]
Since $d_2 = L_2 - a/2$, the requirement $a>d_2$ is equivalent to
\[ L_2 < \frac{3a}{2}.\]
Since
\[ \frac{\partial V}{\partial x}\left( \frac{3a}{2},0\right) = \frac{1}{a^2}\left[ \frac{7M}{8} - \frac{m}{9}\right],\]
and $x\mapsto V(x,0)$ is convex on $(a/2,\infty)$, the requirement $a>d_2$ is satisfied when $0<m\ll M$, i.e., the unique critical point of $V$ on $(a/2,\infty)$ occurs in interval $(a/2, 3a/2)$ because $\partial V/\partial x(3a/2,0)>0$ when $0<m\ll M$.

To show that $(0,L_4)$ is hyperbolic we show that the discriminant $(4\omega^2 - \alpha - \gamma)^2 - 4(\alpha\gamma-\beta)<0$. For $L_4$ we compute explicit simplified expressions of $\alpha$ and $\gamma$ and determine the value of $\beta$. For the latter it follows easily from \eqref{eqn:partialVxy} that $\beta = 0$ because $x=0$ for $(0,L_4)$. Computing $\alpha$ we have
\begin{equation}\label{eqn:alphaL4}
    \alpha
    =  \frac{M}{a^3} - \frac{m}{4 L_4^3} + \frac{3Ma^2}{4((a/2)^2 + L_4^2)^{5/2}} - \frac{M}{((a/2)^2 + L_4^2)^{3/2}}.
\end{equation}
The point $L_4$ satisfies \eqref{eqn:kite}, i.e.,
\begin{equation}\label{eqn:kiteL4}
\frac{M}{a^3} - \frac{m}{4 L_4^3}  - \frac{M}{((a/2)^2 + L_4^2)^{3/2}} = 0.
\end{equation}
Making the obvious substitution of \eqref{eqn:kiteL4} into \eqref{eqn:alphaL4} gives
\begin{equation}\label{eqn:alphaL4simplified}
\alpha =  \frac{3Ma^2}{4((a/2)^2 + L_4^2)^{5/2}} > 0.
\end{equation}
Evaluation of $\partial^2 V/\partial y^2$ at the point $(0,L_4)$ gives
\begin{equation}\label{gammaL4}
\gamma = \frac{M}{a^3} + \frac{2m}{4 L_4^3} + \frac{3M L_4^2}{((a/2)^2+ L_4^2)^{5/2}} - \frac{M}{((a/2)^2+L_4^2)^{3/2}}.
\end{equation}
Substitution of \eqref{eqn:kiteL4} into \eqref{gammaL4} gives
\begin{equation}\label{eqn:gammaL4simplified}
\gamma = \frac{3m}{4 L_4^3}+ \frac{3M L_4^2}{((a/2)^2+ L_4^2)^{5/2}} > 0.
\end{equation}
Substitution of the simplified expressions for $\alpha$ in \eqref{eqn:alphaL4simplified} and $\gamma$ in \eqref{eqn:gammaL4simplified} into the discriminant, and substitution of the Taylor series expansion of $L_4$ as a function of $m$ about $m=0$ from \eqref{eqn:L4expansion}, and then expanding the resulting expression as a Taylor series in $m$ about $m=0$ gives
\[ (4\omega^2 -\alpha -\gamma)^2 - 4(\alpha\gamma -\beta^2)
= \frac{M}{a^6}\left[ -\frac{23M}{4} + \frac{5\sqrt 3 m}{3}\right] + O(m^2).\]
This shows that the discriminant is negative when $0<m\ll M$.
\end{proof}

The hyperbolic dynamics near $(0,L_4)$ for $m \ll M$ are related to corresponding hyperbolic dynamics near $(0,-L_4)$ for $m \ll M$ by time reversing symmetry $\hat R_2$ in \eqref{timereversingsymmetry}. 

\begin{corollary}\label{LyapunovFamily} Let $(m,M,a)\in ({\mathbb R}^+)^3$. Emanating from each equilibrium $(\pm L_1,0)$ is a one-parameter family of periodic orbits. Emanating from each equilibrium $(\pm L_2,0)$ when $m\ll M$ is a one-parameter family of periodic orbits.
\end{corollary}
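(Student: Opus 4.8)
\noindent\emph{Proof proposal.} The plan is to apply the Lyapunov Center Theorem at each of the equilibria $(\pm L_1,0)$ and $(\pm L_2,0)$, using only the spectral information already established in Theorem \ref{thmSpectralStability}. First I would extract the eigenvalue structure of the linearization matrix $A$ from Subsection 3.3. By Theorem \ref{thmSpectralStability}, $(L_1,0)$ is a saddle-center for every $(m,M,a)\in({\mathbb R}^+)^3$ and $(L_2,0)$ is a saddle-center when $m\ll M$; in each case this is precisely the inequality $\alpha\gamma-\beta^2<0$ at the equilibrium. Setting $w=z^2$, the characteristic polynomial of $A$ reads $w^2+(4\omega^2-\alpha-\gamma)w+(\alpha\gamma-\beta^2)$, a real quadratic in $w$ whose product of roots is the strictly negative number $\alpha\gamma-\beta^2$. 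Hence it has exactly one positive root $w_+$ and one negative root $w_-$, both simple, and consequently $A$ has the real nonzero eigenvalues $\pm\sqrt{w_+}$ together with a simple pair of nonzero purely imaginary eigenvalues $\pm i\lambda$, where $\lambda=\sqrt{-w_-}>0$.

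Next I would verify the hypotheses of the Lyapunov Center Theorem in its Hamiltonian form (see \cite{meyerOffin}). The system \eqref{eqn:hameqn} is real-analytic in a neighborhood of each of $(\pm L_1,0)$ and $(\pm L_2,0)$, since by Theorem \ref{ExistenceCritPts} the inclusions $L_1\in(0,a/2)$ and $L_2\in(a/2,\infty)$ keep these points away from the singularities of $U$ at $(0,0)$ and at $(\pm a/2,0)$; moreover $H_{xy}$ is a nondegenerate first integral of \eqref{eqn:hameqn}. The nonresonance condition is immediate: the only eigenvalues of $A$ on the imaginary axis are $\pm i\lambda$, and the remaining eigenvalues $\pm\sqrt{w_+}$ are real and nonzero, hence cannot be integer multiples of $i\lambda$. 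The Lyapunov Center Theorem then yields, emanating from $(L_1,0)$ — and, when $m\ll M$, from $(L_2,0)$ — a one-parameter family of periodic orbits of \eqref{eqn:hameqn} whose periods approach $2\pi/\lambda$ as the orbits shrink down to the equilibrium.

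Finally, the equilibria $(-L_1,0)$ and $(-L_2,0)$ are handled by the symmetry $V\circ R_1=V$ of \eqref{Vsymmetries}: these are again critical points of $V$, of the same type, with linearizations conjugate via $\hat R_1$ to those at $(L_1,0)$ and $(L_2,0)$, so the preceding argument applies verbatim; alternatively, one may simply transport the families already constructed at $(L_i,0)$ to $(-L_i,0)$ via the time-reversing symmetry $\hat R_i$ of \eqref{timereversingsymmetry}. This step is essentially bookkeeping once Theorem \ref{thmSpectralStability} is in hand. The one point that must be checked with some care — and the only place I would expect any friction — is confirming that the cited version of the Lyapunov Center Theorem genuinely applies here, namely that the purely imaginary pair is simple and nonresonant and that the Hamiltonian supplies the required nondegenerate integral; but all of this follows from $\alpha\gamma-\beta^2<0$ together with the fact that the other eigenvalue pair is real and therefore off the imaginary axis, so I do not anticipate any essential difficulty.
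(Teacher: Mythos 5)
Your proposal is correct and follows essentially the same route as the paper: invoke the Lyapunov Center Theorem at $(L_1,0)$ and (for $m\ll M$) at $(L_2,0)$ using the saddle-center spectrum $\alpha\gamma-\beta^2<0$ from Theorem \ref{thmSpectralStability}, then transport the families to $(-L_i,0)$ via the time-reversing symmetry $\hat R_1$ of \eqref{timereversingsymmetry}. Your explicit verification of the simple imaginary pair and the nonresonance condition is a more detailed write-up of exactly the argument the paper leaves implicit.
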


\begin{proof}
This follows by the Lyapunov Center Theorem (see \cite{meyerOffin}) for $(L_i,0)$, $i=1,2$. Applying the time-reversing symmetry $\hat R_1$ in \eqref{timereversingsymmetry} gives the result for $(-L_i,0)$, $i=1,2$.
\end{proof}

\section{Periodic Orbits Near the Origin}
The two asteroids in COM coordinates experience a binary collision at the origin. We show near the origin that there are two families of periodic orbits in the COM reduced Binary Asteroid Problem through symplectic scaling and the Poincar\'e Continuation Method as implemented in Meyer and Offin \cite{meyerOffin}.

\subsection{Scaling}
For small $\delta > 0$ the transformation $(x,y,p_x,p_y) \mapsto (\xi,\eta,p_\xi,p_\eta)$ given by the inverse
\begin{align}
x = \delta^2\xi, & & y = \delta^2\eta, & & p_x = \delta^{-1}p_\xi, & & p_y = \delta^{-1}p_\eta
\label{eqn:scalingdoublecollision}
\end{align}
is symplectic with multiplier $\delta^{-1}$. The Hamiltonian $H_{xy}$ in \eqref{eqn10} becomes
\begin{equation}
\begin{aligned}
H_{\xi\eta}
& = \delta^{-1} H_{xy}  \\
& = \frac{1}{4\delta^3 m}\big( p_\eta^2 + p_\xi^2\big) - \frac{2\pi}{P}\big(\xi p_\eta - \eta p_\xi\big) - \frac{m^2}{ 2\delta^3 \sqrt { \xi^2 + \eta^2}}  \\
&  \ \ \ \  -\frac{ m M}{\delta \sqrt { (\delta^2\xi + a/2)^2 + (\delta^2\eta)^2}} - \frac{mM}{ \delta \sqrt{ (\delta^2 \xi - a/2)^2 + (\delta^2 \eta)^2}}.
\label{eqn:HamDC1}
\end{aligned}
\end{equation}
Scaling time by $t\to \delta^{-3}t$ is the same as multiplying $H_{\xi\eta}$ by $\delta^3$. Multiplying $H_{\xi\eta}$ in \eqref{eqn:HamDC1} by $\delta^3$ gives the Hamiltonian
\begin{equation}
\begin{aligned} 
H_{\xi\eta}
& =  \frac{1}{4m}\big( p_\eta^2 + p_\xi^2\big) - \frac{2\pi\delta^3}{P}\big(\xi p_\eta - \eta p_\xi\big) - \frac{m^2}{ 2 \sqrt { \xi^2 + \eta^2}} \\
&  \ \ \ \  - \frac{\delta^2 m M}{ \sqrt { (\delta^2\xi + a/2)^2 + (\delta^2\eta)^2}} - \frac{\delta^2 mM}{\sqrt{ (\delta^2 \xi - a/2)^2 + (\delta^2 \eta)^2}}.
\label{eqn:HamDC2}
\end{aligned}
\end{equation}
The last two terms in \eqref{eqn:HamDC2} are real analytic in $\delta$ at $\delta=0$. Their Taylor series expansions in $\delta$ about $\delta = 0$ are
\begin{align*}
-\frac{\delta^2 m M}{ \sqrt { (\delta^2\xi + a/2)^2 + (\delta^2\eta)^2}}
& = - \frac{2mM}{a} \delta^3 + \frac{ 4mM\xi}{ a^2} \delta^5 + O(\delta^7), \\
- \frac{\delta^2 mM}{\sqrt{ (\delta^2 \xi - a/2)^2 + (\delta^2 \eta)^2}}
& = - \frac{2mM}{a} \delta^3 - \frac{ 4mM\xi }{ a^2} \delta^5 + O(\delta^7).
\end{align*}
In these two expansions the terms of order $\delta^5$ cancel and the terms of order $\delta^3$ are constants (which we ignore), so the expansion of the Hamiltonian is
\[ H_{\xi\eta} = H_0 + \delta^3 H_3 + O(\delta^7)\]
where
\begin{align*}
H_0 & = \frac{1}{4m}\big( p_\xi^2 + p_\eta^2\big) - \frac{m^2}{2\sqrt{\xi^2 + \eta^2}}, \\
H_3 & = -\frac{2\pi}{P}\big( \xi p_\eta - \eta p_\xi\big).
\end{align*}

\subsection{Periodic Solutions to $O(\delta^3)$}
Seeking circular periodic solutions, we transform $H = H_0 + \delta^3 H_3 + O(\delta^7)$ into symplectic polar coordinates
\[ (\xi,\eta,p_\xi,p_\eta) \to (r,\theta,p_r,p_\theta)\]
defined by (the inverse transformation, see \cite{meyerOffin})
\begin{align*}
\xi & = r\cos\theta,\ \eta = r\sin\theta, \\
p_\xi & = p_r \cos\theta - \frac{p_\theta}{r}\sin\theta,\ p_\eta  = p_r\sin\theta + \frac{p_\theta}{r}\cos\theta.
\end{align*}
It follows that
\[ H_0  = \frac{1}{4m}\left[ p_r^2 + \frac{p_\theta^2}{r^2}\right] - \frac{m^2}{2r},\ 
H_3 = - \frac{2\pi p_\theta}{P}.\]
The corresponding system of ODEs for $H_0 + \delta^3 H_3$ is
\begin{align} \label{eqn:HamOrigin}
\dot r  = \frac{p_r}{2m}, & & 
\dot \theta  = \frac{p_\theta}{2mr^2} - \delta^3 \frac{2\pi}{P}, & &
\dot p_r = \frac{p_\theta^2}{2mr^3} - \frac{m^2}{2r^2}, & &
\dot p_\theta  = 0.
\end{align}
For $c = \sqrt{m^3}$, equations \eqref{eqn:HamOrigin} have two circular periodic solutions
\begin{equation}\label{Eqn:SolOrigin}
r(t) = 1,\ \theta(t) = \theta(0) + \omega t,\ p_r(t) = 0,\ p_\theta = \pm c,
\end{equation}
whose frequencies are
\[ \omega = \pm \frac{\sqrt{m}}{2} - \delta^3 \frac{2\pi}{P} \]
and whose corresponding periods are
\[ T = \frac{2\pi}{\vert \omega\vert} = \frac{4\pi P}{\sqrt{m}P \mp 4\pi \delta^3}.\]

\subsection{Multipliers and Continuation}
We linearize the ODEs for $\dot r$ and $\dot p_r$ along the two circular periodic solutions \eqref{Eqn:SolOrigin} to obtain their multipliers.
The linearization is
\[ \dot r  = \frac{p_r}{2m}, \ \dot p_r  = -\frac{m^2}2 r.\]
The nontrivial multipliers of the two circular periodic solutions are
\[ \exp\left(\pm  i T \frac{\sqrt{m}}2 \right) = 1 \pm 2\pi i \left( \frac{4\pi}{\sqrt{m}P}\right) \delta^3 + O(\delta^6). \]
Thus, for small positive $\delta$, the two circular periodic solutions are elementary. By the argument in Meyer and Offin \cite{meyerOffin} and Theorem \ref{thm1} we have the following result.

\begin{theorem}\label{Thm:CentralHills}
There exist two one-parameter families of nearly circular periodic solutions to the Binary Asteroid Problem \eqref{eqn6} in the case of equal asteroid $(m_1 = m_2 = m)$ and equal primary $(M_1 = M_2 = M/2)$ masses, where the two asteroids are near the origin.
\end{theorem}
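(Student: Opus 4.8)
The plan is to apply the Poincar\'e continuation method of \cite{meyerOffin} to the scaled Hamiltonian $H_{\xi\eta}(\delta) = H_0 + \delta^3 H_3 + O(\delta^7)$, using the circular solutions \eqref{Eqn:SolOrigin} of the truncation $H_0 + \delta^3 H_3$ as seeds, and then to transport the continued orbits back through the symplectic scaling \eqref{eqn:scalingdoublecollision} and the COM reduction of Theorem \ref{thm1}.

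Fix one of the two signs and a small $\delta > 0$. Restrict the flow of $H_{\xi\eta}(\delta)$ to the energy level containing the corresponding circular solution of the truncation and choose a Poincar\'e section $\Sigma$ transverse to that solution. For the truncated flow, the point where the circular orbit meets $\Sigma$ is a fixed point of the return map whose linearization is the $2 \times 2$ block carrying the nontrivial multipliers computed above, $\exp(\pm i T \sqrt m / 2) = 1 \pm 2\pi i (4\pi / \sqrt m P)\delta^3 + O(\delta^6)$. These are $\ne 1$ for $0 < \delta \ll 1$, so the circular orbit is elementary and $I - DP$ is invertible with $\|(I-DP)^{-1}\| = O(\delta^{-3})$. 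Since the period $T$ stays bounded away from $0$ and $\infty$ as $\delta \to 0^+$ and the full vector field differs from the truncated one by $O(\delta^7)$, the return map of the full flow differs from the truncated return map by $O(\delta^7)$. Thus the fixed-point equation for the full return map is an $O(\delta^7)$ perturbation of an equation whose derivative is invertible with margin $O(\delta^3)$; because all data are real-analytic and $7 > 3$, a quantitative implicit function theorem (equivalently, the continuation-with-parameter results in \cite{meyerOffin}) yields a fixed point $O(\delta^4)$-close to the circular one, hence a nearly circular periodic solution $\tilde\Gamma_\pm(\delta)$ of $H_{\xi\eta}(\delta)$.

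Next I would undo the reductions. Reversing the time rescaling $t \mapsto \delta^{-3} t$ and the scaling \eqref{eqn:scalingdoublecollision} turns $\tilde\Gamma_\pm(\delta)$ into a periodic solution of $H_{xy}$ in \eqref{eqn10}; because \eqref{eqn:scalingdoublecollision} contracts an $O(1)$-size orbit in $(\xi,\eta)$ to an $O(\delta^2)$-size orbit in $(x,y)$, these solutions lie near the origin, so the two asteroids stay close to each other near their barycenter. Letting $\delta$ range over an interval $(0,\delta_0)$ produces orbits of varying size and period, hence a one-parameter family in $H_{xy}$ for each of the two signs in $p_\theta = \pm c$ (prograde and retrograde near-circular relative motion). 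Finally, Theorem \ref{thm1} lifts each $T$-periodic solution $(x(t),y(t),p_x(t),p_y(t))$ of $H_{xy}$ to the $T$-periodic solution $(x(t),y(t),0,0,p_x(t),p_y(t),0,0)$ of $H_{BA}$ in \eqref{eqn6} with $m_1 = m_2 = m$ and $M_1 = M_2 = M/2$, giving the two asserted families in the Binary Asteroid Problem.

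The main obstacle is the super-degeneracy of the Kepler Hamiltonian $H_0$ at $\delta = 0$: its circular orbits form a continuum on which the monodromy has $+1$ as an eigenvalue of full multiplicity, so one cannot continue straight from $\delta = 0$. What makes the argument work is that the Coriolis term $\delta^3 H_3$ breaks this degeneracy for $\delta > 0$, moving the nontrivial multipliers off $1$ by $O(\delta^3)$, after which one continues only against the genuinely higher-order tail $O(\delta^7)$. The delicate point is quantitative uniformity as $\delta \to 0^+$ — the invertibility margin shrinks like $\delta^3$ while the defect shrinks like $\delta^7$ — and this is precisely the regime handled by the Poincar\'e continuation apparatus in \cite{meyerOffin}.
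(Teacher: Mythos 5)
Your proposal is correct and follows essentially the same route as the paper: the symplectic scaling \eqref{eqn:scalingdoublecollision} with time rescaling, the circular solutions \eqref{Eqn:SolOrigin} of the truncation $H_0+\delta^3H_3$, the nontrivial multipliers $1\pm 2\pi i(4\pi/\sqrt{m}P)\delta^3+O(\delta^6)$ showing the orbits are elementary, Poincar\'e continuation against the $O(\delta^7)$ tail, and the lift to $H_{BA}$ via Theorem \ref{thm1}. The only difference is that you spell out the quantitative step (invertibility margin $O(\delta^3)$ versus defect $O(\delta^7)$) that the paper delegates to the argument in Meyer and Offin, which is a faithful reconstruction of that argument.
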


Figure \ref{fig:CentralHills} shows a solution to \eqref{Eqn:SolOrigin} undergoing a numerical process of continuation into \eqref{eqn10}. Notice how the orbits maintain their near-circular shape and periodic nature. This is consistent with Theorem \ref{Thm:CentralHills}.

\begin{figure}[ht]%
\centering
\includegraphics[width=1.0\textwidth]{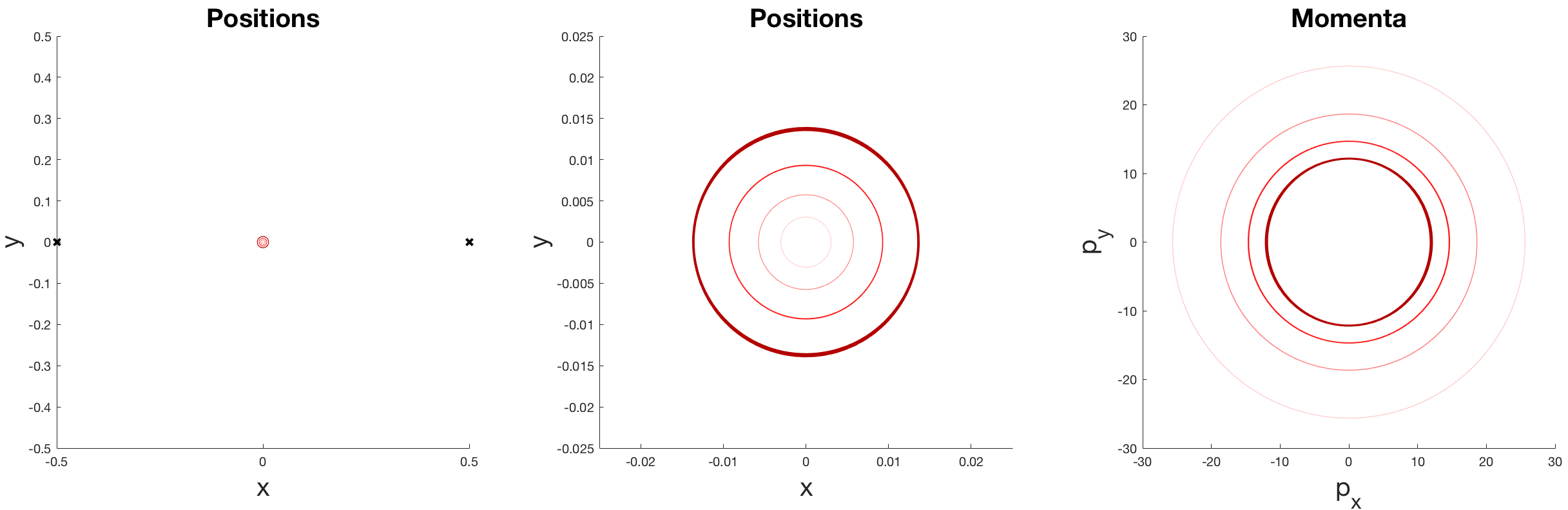}
\caption[Central-Hill continuation]{Depiction of a numerical $\delta$ continuation process for a periodic orbit near the origin, according to \eqref{Eqn:SolOrigin}. The hues darken as $\delta$ is increased, with the darkest orbit corresponding to $\delta = 1$. (Left) A view of positions showing the primaries as black x's. (Middle) The left panel zoomed in to see the detail.}\label{fig:CentralHills}
\end{figure}

%%%%%%%%%%% Section 5 %%%%%%%%%%%

\section{Hill-Type Orbits}
We show that there exist two one-parameter families of near-circular Hill-type orbits (cf. \cite{Stoica,meyerOffin}) in the COM reduced Binary Asteroid Problem where one asteroid orbits one primary and the other asteroid orbits the other primary.
We use symplectic scaling to make the Hamiltonian $H_{xy}$ resemble that of a rotating 2-body problem, which has simple circular periodic solutions.

\subsection{Scaling}

Using the COM reduced Hamiltonian \eqref{eqn10}, we perform a symplectic shift which places the left primary, located at $(x,y) = (-a/2, 0)$, at the origin. We must also shift the vertical momentum in order to mitigate the introduction of unfavorable terms. The shift is
\begin{align}
x \mapsto x + \frac{a}{2}, & & y \mapsto y, & & p_x \mapsto p_x, & & p_y \mapsto p_y + \frac{2\pi am}{P}.
\label{eqn11}
\end{align}
In the shifted coordinates, the Hamiltonian \eqref{eqn10} becomes
\begin{align}
H_{xy} & = \frac{1}{4m}\left(p_x^2 + p_y^2\right) - \frac{2\pi}{P}\left(xp_y - yp_x\right) + \frac{4\pi^2 amx}{P^2} - \frac{\pi^2a^2m}{P^2}  \nonumber\\
& \ \ \ \ - \frac{m^2}{2\sqrt{(x-a/2)^2 + y^2}} - \frac{mM}{\sqrt{x^2+y^2}} - \frac{mM}{\sqrt{(x-a)^2 + y^2}}.
\label{eqn12}
\end{align}
We use Kepler's Third Law \eqref{eqn3} to simplify $4\pi^2 amx/P^2 = mMx/a^2$. Additionally, since \eqref{eqn12} is time-independent, $H_{xy}$ remains an integral, and so the constant $-\pi^2a^2m/P^2$ term is dropped. 

For small $\delta > 0$ the transformation $(x,y,p_x,p_y) \mapsto (\xi,\eta,p_\xi,p_\eta)$ given by
\begin{align}
x = \delta^2\xi, & & y = \delta^2\eta, & & p_x = \delta^{-1}p_\xi, & & p_y = \delta^{-1}p_\eta
\label{eqn13}
\end{align}
is symplectic with multiplier $\delta^{-1}$. The Hamiltonian \eqref{eqn12} becomes
\begin{align}
    H_{\xi\eta} = \delta^{-1}H_{xy} & = \frac{\delta^{-3}}{4m}\left(p_\xi^2 + p_\eta^2\right) - \frac{2\pi}{P}\left(\xi p_\eta - \eta p_\xi\right) + \frac{\delta mM\xi}{a^2} - \frac{\delta^{-3}mM}{\sqrt{\xi^2+\eta^2}} \nonumber\\
& \ \ \ \ - \frac{\delta^{-1}m^2}{2\sqrt{(\delta^2\xi-a/2)^2 + (\delta^2\eta)^2}} - \frac{\delta^{-1}mM}{\sqrt{(\delta^2\xi-a)^2 + (\delta^2\eta)^2}}.
\label{eqn14}
\end{align}
Scaling time via $t \mapsto \delta^{-3}t$ is the same as multiplying the Hamiltonian $H_{\xi\eta}$ in \eqref{eqn14} by $\delta^3$. After multiplying $H_{\xi\eta}$ by $\delta^3$ and grouping terms of like powers of $\delta$, we find
\begin{align}
    H_{\xi\eta} & = \left[\frac{1}{4m}\left(p_\xi^2 + p_\eta^2\right) - \frac{mM}{\sqrt{\xi^2+\eta^2}}\right] - \delta^3\left[\frac{2\pi}{P}\left(\xi p_\eta - \eta p_\xi\right)\right] + \delta^4\left[\frac{mM\xi}{a^2}\right] \nonumber\\
& \ \ \ \ - \delta^2\left[\frac{m^2}{2\sqrt{(\delta^2\xi-a/2)^2 + (\delta^2\eta)^2}} + \frac{mM}{\sqrt{(\delta^2\xi-a)^2 + (\delta^2\eta)^2}}\right].
\label{eqn15}
\end{align}
The Hamiltonian in \eqref{eqn15} is analytic in $\delta$ about $0$. Hence we expand $H_{\xi\eta}$ in a Taylor series in $\delta$ about $0$, yielding
\begin{equation}
    H_{\xi\eta} = H_0 - \frac{\delta^2m(m+M)}{a} + \delta^3H_3 + \delta^4H_4 + O(\delta^6),
    \label{eqn16}
\end{equation}
where
\begin{align}
H_0 = \frac{1}{4m}\left(p_\xi^2 + p_\eta^2\right) - \frac{mM}{\sqrt{\xi^2+\eta^2}}, & & H_3 = -\frac{2\pi}{P}\left(\xi p_\eta - \eta p_\xi\right), & & H_4 = -\frac{2m^2\xi}{a^2}.
\label{eqn17}
\end{align}
Once again, we ignore the constant $-\delta^2m(m+M)/a$ by time-independence.

\subsection{Periodic Solutions to $O(\delta^3)$}
The Hamiltonian system defined by $H_0 + \delta^3H_3$ in \eqref{eqn16} has circular solutions. To obtain them, we first make the change $(\xi,\eta,p_\xi,p_\eta) \mapsto (r,\theta,p_r,p_\theta)$ to symplectic polar coordinates:
\begin{align*}
\xi & = r\cos \theta,\ \eta = r\sin\theta, \\
p_\xi & = p_r\cos\theta - \frac{p_\theta}{r}\sin\theta,\ p_\eta = p_r\sin\theta + \frac{p_\theta}{r}\cos\theta.
\end{align*}
The Hamiltonian becomes
\begin{equation}
    H_0 + \delta^3H_3 = \frac{1}{4m}\left(p_r^2 + \frac{p_\theta^2}{r^2}\right) - \frac{mM}{r} - \delta^3\frac{2\pi p_\theta}{P}.
    \label{eqn18}
\end{equation}
The equations of motion defined by \eqref{eqn18} are
\begin{align}
    \dot{r} = \frac{p_r}{2m}, & & \dot{\theta} = \frac{p_\theta}{2mr^2} - \frac{2\pi\delta^3}{P}, & & \dot{p}_r = \frac{p_\theta^2}{2mr^3} - \frac{mM}{r^2}, & & \dot{p}_\theta = 0.
    \label{eqn19}
\end{align}
For $c = \sqrt{2m^2M}$, equations \eqref{eqn19} have two circular periodic solutions
\begin{equation}\label{eqn20}
r = 1,\ \theta(t) = \theta(0) + \omega t,\ p_r = 0,\ p_\theta = \pm c,
\end{equation}
whose frequencies are
\[ \omega = \frac{\pm cP - 4\pi m \delta^3}{2mP} = \frac{\pm m\sqrt{2M}P - 4\pi m \delta^3}{2m P} \]
and corresponding periods are
\begin{equation}\label{eqn21}
 T = \frac{2\pi}{\vert \omega\vert} = \frac{ 4\pi m P}{ cP \mp 4\pi m\delta^3} = \frac{4\pi m P}{ m\sqrt{2M}P \mp 4\pi m \delta^3}.
\end{equation}

\subsection{Multipliers and Continuation}
We linearize the ODEs in $r$ and $p_r$ in \eqref{eqn19} about the two circular periodic solutions \eqref{eqn20} to obtain their multipliers. The linearization gives
\[ \dot r = \frac{p_r}{2m},\ \dot p_r = -mM r,\]
whose solutions have the form $\exp(\pm i t\sqrt{M/2})$. The nontrivial multipliers of the two circular periodic solutions are
\begin{align*} 
\exp(\pm i T\sqrt{M/2})
& =  \exp\left( \pm i \sqrt{ \frac{M}{2}} \frac{4\pi m P}{m\sqrt{2M}P \mp 4\pi m \delta^3}\right) \\
& = 1 \pm 2\pi i \left( \frac{2\pi \sqrt{2}} {P\sqrt{M}}\right)\delta^3 + O(\delta^6).
\end{align*}
Thus, for small positive $\delta$, the two circular periodic solutions are elementary. By the Poincar\'e continuation argument in Meyer and Offin \cite{meyerOffin}, and Theorem \ref{thm1} we have the following result.

\begin{theorem}[Hill Orbits]\label{thm2}
There exist two one-parameter families of near-circular Hill-type periodic solutions to the Binary Asteroid Problem \eqref{eqn6} in the case of equal asteroid $(m_1 = m_2 = m)$ and equal primary  $(M_1 = M_2 = M/2)$ masses, where each asteroid orbits near one primary.
\end{theorem}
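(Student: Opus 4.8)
The plan is to use $\delta$ as a continuation parameter and apply the Poincar\'e continuation method exactly as in Meyer and Offin \cite{meyerOffin}, starting from the scaled Hamiltonian $H_{\xi\eta}$ in \eqref{eqn16}. At $\delta = 0$ the system reduces to $H_0$ in \eqref{eqn17}, a planar Kepler (two-body) problem, for which \emph{every} bounded orbit is periodic; in particular the circular orbits form a degenerate two-parameter family (radius and phase) and none of them is isolated, so one cannot continue directly from $\delta = 0$. The key observation is that the $\delta^3 H_3$ term --- the Coriolis term $-2\pi p_\theta/P$ --- breaks this degeneracy: since $p_\theta$ remains an integral of $H_0 + \delta^3 H_3$, the reduced radial system in \eqref{eqn19} has, for each sign $p_\theta = \pm c$ with $c = \sqrt{2m^2M}$, the circular relative equilibrium $r \equiv 1$, $p_r \equiv 0$ of \eqref{eqn20}, and the ambient flow turns this into the genuinely periodic solution of period $T$ in \eqref{eqn21}.

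First I would confirm that these two circular solutions are elementary for small $\delta > 0$. Linearizing the radial equations along \eqref{eqn20} gives the harmonic oscillator $\dot r = p_r/(2m)$, $\dot p_r = -mMr$ of frequency $\sqrt{M/2}$, so the nontrivial pair of Floquet multipliers is $\exp(\pm i T\sqrt{M/2}) = 1 \pm 2\pi i\,(2\pi\sqrt 2/(P\sqrt M))\,\delta^3 + O(\delta^6)$, which differs from $1$ once $\delta > 0$. Hence, on a fixed energy level, the multiplier $+1$ of each circular solution of $H_0 + \delta^3 H_3$ has algebraic multiplicity exactly two (the trivial pair from the flow direction and the energy direction), i.e. each circular solution is elementary in the sense of \cite{meyerOffin}. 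I would then treat the full $H_{\xi\eta}$ of \eqref{eqn16} as an $O(\delta^4)$ perturbation of $H_0 + \delta^3 H_3$ and invoke the continuation/cylinder theorem of \cite{meyerOffin}: because each circular solution is elementary, the Poincar\'e return map on a fixed energy surface has a nondegenerate fixed point that persists under the $\delta^4 H_4 + O(\delta^6)$ terms, and as the energy is varied these fixed points sweep out a one-parameter family of periodic solutions of $H_{\xi\eta}$ close, for small $\delta>0$, to the corresponding circular solution. The two choices $p_\theta = \pm c$ yield the two distinct (prograde and retrograde) families.

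Finally I would undo, in order, the passage to symplectic polar coordinates, the time rescaling $t \mapsto \delta^{-3}t$, the symplectic scaling \eqref{eqn13}, and the shift \eqref{eqn11}; this returns two one-parameter families of near-circular periodic solutions $(x(t),y(t),p_x(t),p_y(t))$ of $H_{xy}$ in \eqref{eqn10} for which $(x(t),y(t))$ stays within $O(\delta^2)$ of the left primary $(-a/2,0)$. Theorem \ref{thm1} then lifts each such solution to the periodic solution $(x(t),y(t),0,0,p_x(t),p_y(t),0,0)$ of $H_{BA}$ in \eqref{eqn6} with $m_1=m_2=m$ and $M_1=M_2=M/2$; by the COM symmetry the second asteroid sits at $(-x(t),-y(t))$ and therefore stays near the right primary $(a/2,0)$, so each asteroid orbits near one primary, which is the claimed Hill-type behavior.

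The step I expect to be the main obstacle is the continuation across the singular limit: at $\delta = 0$ the unperturbed problem is the \emph{totally} degenerate Kepler problem, so one must be careful that the ``elementary'' property, the nondegeneracy of the Poincar\'e map, and the implicit-function estimates are all uniform as $\delta \to 0^+$ --- the circular solutions of $H_0 + \delta^3 H_3$ only become isolated \emph{because} of the $\delta^3$ term, and one needs them to survive the addition of the higher-order terms. This is precisely the scaled form of the Poincar\'e continuation argument carried out for Hill-type (and comet-type) orbits in \cite{meyerOffin}, which is why, once the multiplier computation above is in place, citing that argument suffices.
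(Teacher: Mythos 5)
Your proposal follows essentially the same route as the paper: the symplectic shift and $\delta$-scaling to \eqref{eqn16}, the circular solutions of $H_0+\delta^3 H_3$ in polar coordinates with $p_\theta=\pm c$, the identical multiplier computation $\exp\left(\pm i T\sqrt{M/2}\right)=1\pm 2\pi i\left(2\pi\sqrt{2}/(P\sqrt{M})\right)\delta^3+O(\delta^6)$ showing the solutions are elementary, continuation via the Poincar\'e argument of Meyer and Offin, and the lift to $H_{BA}$ through Theorem \ref{thm1}. Your added remarks on the degeneracy of the Kepler limit at $\delta=0$ only elaborate on why the cited continuation argument is the right tool; they do not change the approach, which is correct.
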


Figure \ref{fig1} shows a solution to \eqref{eqn20} undergoing a numerical process of continuation into \eqref{eqn10}, similar to the previous section. Notice the geometric similarities with the orbits depicted in Figure \ref{fig:CentralHills}.

\begin{figure}[ht]%
\centering
\includegraphics[width=1.0\textwidth]{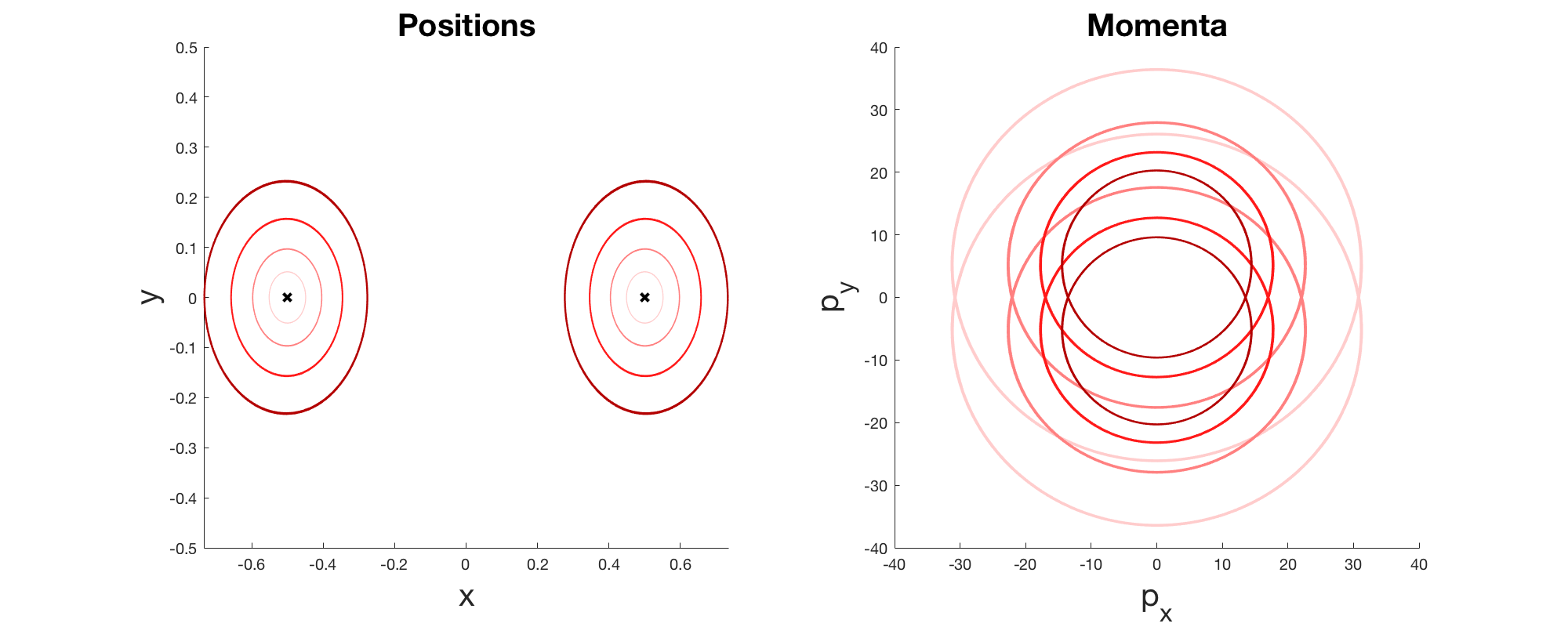}
\caption{Depiction of a numerical $\delta$ continuation process of a Hill-type orbit from the circular orbit in \eqref{eqn20}.
}\label{fig1}
\end{figure}

\subsection{Numerical Continuation}
We remark that we have some numerical evidence that the orbits of Theorem \ref{thm2} persist as quasi-periodic orbits, as we perturb the primary mass deviation parameter $\nu$ from $0$ (see Figure \ref{fig2}). This suggests that the COM approach can be adapted to perform the reduction in the arbitrary primary mass case. However, we have not yet determined what that adaptation is.

\begin{figure}[ht]%
\centering
\includegraphics[width=1.0\textwidth]{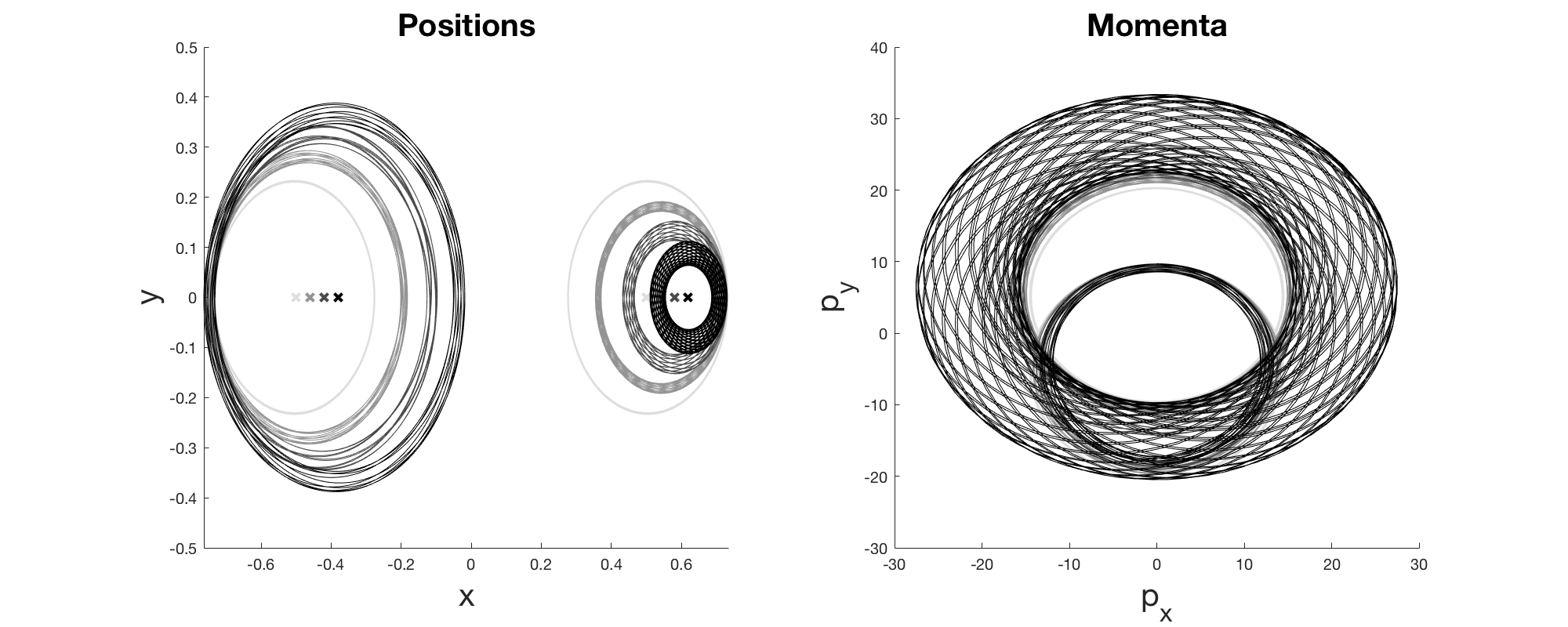}
\caption{Depiction of a numerical $\nu$ continuation process for the darkest Hill-type orbit in Figure \ref{fig1}. The x's in the left graph mark the primaries and are shaded to match their associated orbits.}\label{fig2}
\end{figure}

%%%%%%%%%% Section 6 %%%%%%%%%%%

\section{Comet-Type Orbits}
We show that there are two one-parameter families of near-circular comet-orbits in the Binary Asteroid Problem (cf. \cite{meyerOffin,Stoica}) when the two ratios $m_1/M_1$ and $m_2/M_2$ are small enough. We could have applied a symplectic scaling to the Hamiltonian $H_{xy}$ of the COM reduced Binary Asteroid Problem, but we found a more general approach for the Hamiltonian $H_{BA}$ of the Binary Asteroid Problem that does not requires the asteroids nor the primaries to have equal mass.

\subsection{Scaling}
Following the comet approach in \cite{meyerOffin}, let $\delta > 0$ be a small scaling parameter and define the transformation $(x_i,y_i,p_{x_i},p_{y_i}) \mapsto (\xi_i,\eta_i,p_{\xi_i},p_{\eta_i})$, $i=1,2$, by
\begin{align}
x_i = \delta^{-2}\xi_i, & & y_i = \delta^{-2}\eta_i, & & p_{x_i} = \delta p_{\xi_i}, & & p_{y_i} = \delta p_{\eta_i}.
\label{eqn22}
\end{align}
This is symplectic with multiplier $\delta$ and the Binary Asteroid Hamiltonian $H_{BA}$ in \eqref{eqn6} becomes
\begin{align}
H_{\xi\eta} 
& = \delta H_{BA} \\
& = \frac{\delta^3}{2m_1}\left(p_{\xi_1}^2 + p_{\eta_1}^2\right) + \frac{\delta^3}{2m_2}\left(p_{\xi_2}^2 + p_{\eta_2}^2\right) - \frac{2\pi}{P}\left(\xi_1p_{\eta_1} - \eta_1p_{\xi_1}\right) \nonumber\\
& \ \ \ \ - \frac{2\pi}{P}\left(\xi_2p_{\eta_2} - \eta_2p_{\xi_2}\right) - \frac{\delta^3m_1m_2}{\sqrt{(\xi_1 - \xi_2)^2 + (\eta_1 - \eta_2)^2}} \nonumber\\
& \ \ \ \ -\frac{\delta^3m_1M_1}{\sqrt{(\xi_1 + a\delta^2M_2/M)^2 + \eta_1^2}} -\frac{\delta^3m_2M_1}{\sqrt{(\xi_2 + a\delta^2M_2/M)^2 + \eta_2^2}} \nonumber\\
& \ \ \ \ -\frac{\delta^3m_1M_2}{\sqrt{(\xi_1 - a\delta^2M_1/M)^2 + \eta_1^2}} -\frac{\delta^3m_2M_2}{\sqrt{(\xi_2 - a\delta^2M_1/M)^2 + \eta_2^2}}.
\label{eqn25}
\end{align}
The Hamiltonian $H_{\xi\eta}$ in \eqref{eqn25} is analytic in $\delta$ at zero, hence we expand the last four terms in $H_{\xi\eta}$ in a Taylor series in $\delta$ about $\delta=0$ and then group terms with like powers of $\delta$. The terms from the expansions with $\delta^5$ cancel and the terms with $\delta^3$ combine to give 
\begin{equation}
H_{\xi\eta} = H_0 + \delta^3H_3 + O(\delta^7),
\label{eqn26}
\end{equation}
where
\begin{align}
H_0 & = -\frac{2\pi}{P}\left(\xi_1p_{\eta_1} - \eta_1p_{\xi_1} + \xi_2p_{\eta_2} - \eta_2p_{\xi_2}\right),\nonumber\\
H_3 & = \frac{1}{2m_1}\left(p_{\xi_1}^2 + p_{\eta_1}^2\right) + \frac{1}{2m_2}\left(p_{\xi_2}^2 + p_{\eta_2}^2\right) - \frac{m_1M}{\sqrt{\xi_1^2 + \eta_1^2}} - \frac{m_2M}{\sqrt{\xi_2^2 + \eta_2^2}} \nonumber\\
& \ \ \ \ - \frac{m_1m_2}{\sqrt{(\xi_1 - \xi_2)^2 + (\eta_1 - \eta_2)^2}}.
\label{eqn27}
\end{align}

\subsection{Phase Condition in $O(\delta^3)$}
To find circular periodic solutions of the system with Hamiltonian $H_0 + \delta^3H_3$ we change to double symplectic polar coordinates $(\xi_i, \eta_i, p_{\xi_i}, p_{\eta_i}) \mapsto (r_i, \theta_i, p_{r_i}, p_{\theta_i})$. This transformation is symplectic since the new coordinates of a single asteroid (i.e., fixed $i$) depend only on that asteroid's old coordinates, via a symplectic map (see \cite{meyerOffin}).
The Hamiltonian becomes
\begin{align}
H_0 + \delta^3H_3 & = -\frac{2\pi}{P}\left(p_{\theta_1} + p_{\theta_2}\right) + \frac{\delta^3}{2m_1}\left(p_{r_1}^2 + \frac{p_{\theta_1}^2}{r_1^2}\right) + \frac{\delta^3}{2m_2}\left(p_{r_2}^2 + \frac{p_{\theta_2}^2}{r_2^2}\right) \nonumber\\
& \ \ \ \ - \frac{\delta^3m_1M}{r_1} -\frac{\delta^3m_2M}{r_2} - \frac{\delta^3m_1m_2}{\sqrt{r_1^2 + r_2^2 - 2r_1r_2\cos(\theta_1 - \theta_2)}}.
\end{align}
The associated equations of motion are
\begin{equation}
\begin{aligned}
\dot{r}_1 & = \frac{\delta^3p_{r_1}}{m_1},\ \dot{\theta}_1 = -\frac{2\pi}{P} + \frac{\delta^3p_{\theta_1}}{m_1 r_1^2}, \\
\dot{r}_2  & = \frac{\delta^3p_{r_2}}{m_2},\ \dot{\theta_2} = -\frac{2\pi}{P} + \frac{\delta^3p_{\theta_2}}{m_2 r_2^2}, \\
\dot{p}_{r_1} & = \delta^3\left[\frac{p_{\theta_1}^2}{m_1 r_1^3} - \frac{m_1 M}{r_1^2} - \frac{m_1m_2(r_1 - r_2\cos(\theta_1 - \theta_2))}{\sqrt{r_1^2 + r_2^2 - 2r_1r_2\cos(\theta_1 - \theta_2)}^3}\right], \\
\dot{p}_{\theta_1} & = -\frac{\delta^3m_1m_2r_1r_2\sin(\theta_1 - \theta_2)}{\sqrt{r_1^2 + r_2^2 - 2r_1r_2\cos(\theta_1 - \theta_2)}^3}, \\
\dot{p}_{r_2} & = \delta^3\left[\frac{p_{\theta_2}^2}{m_2 r_2^3} - \frac{m_2 M}{r_2^2} - \frac{m_1m_2(r_2 - r_1\cos(\theta_1 - \theta_2))}{\sqrt{r_1^2 + r_2^2 - 2r_1r_2\cos(\theta_1 - \theta_2)}^3}\right], \\
\dot{p}_{\theta_2} & = \frac{\delta^3m_1m_2r_1r_2\sin(\theta_1 - \theta_2)}{\sqrt{r_1^2 + r_2^2 - 2r_1r_2\cos(\theta_1 - \theta_2)}^3}.
\label{eqn29}
\end{aligned}
\end{equation}
Any solution of system \eqref{eqn29} which has $\sin(\theta_1 - \theta_2) \equiv 0$ will necessarily have $p_{\theta_i} = const$, $i=1,2$. Requiring this restriction implies $\theta_1 - \theta_2 = n\pi$ for some $n \in \Int$ and for all time. When $n$ is even, the asteroids have equal phase relative to the barycenter. We determined numerically that this behavior leads to a collision between the asteroids. However, when $n$ is odd, the asteroids have opposite phase, and there exist non-collision solutions.

In the odd case it is sufficient to study $n = 1$; we impose $\theta_1 - \theta_2 = \pi$. This implies through the $p_{\theta_i}$ ODEs that $p_{\theta_i} = c_i$ for some constants $c_i \in \Real$. The phase condition also implies that $\dot{\theta}_1 = \dot{\theta_2}$ for all time. That is,
\begin{align*}
-\frac{2\pi}{P} + \frac{\delta^3p_{\theta_1}}{m_1r_1^2} = -\frac{2\pi}{P} + \frac{\delta^3p_{\theta_2}}{m_2r_2^2} 
\implies \left(\frac{r_1}{r_2}\right)^2 = \frac{c_1/m_1}{c_2/m_2} = \left(\frac{c_1}{c_2}\right)\left(\frac{m_2}{m_1}\right) > 0.
\end{align*}
Immediately this implies $r_1/r_2 = const.$, and $c_1$ and $c_2$ must have the same sign. Define
\begin{equation}
    b = \sqrt{\frac{c_1/m_1}{c_2/m_2}} > 0,
    \label{eqn30}
\end{equation}
so that we have $r_1(t) = br_2(t)$ for all time. We can therefore solve the ODEs \eqref{eqn29} in terms of the second asteroid's coordinates $(r_2,\theta_2,p_{r_2},p_{\theta_2})$.

We investigate circular periodic solutions in the regime wherein $r_i = const$. In this case, the $r_i$ ODEs force $p_{r_i} \equiv 0$. Since the phase condition gives us that $\cos(\theta_1 - \theta_2) = \cos(\pi) = -1$, this in turn gives (after some algebra)
\begin{align*}
0 = \dot{p}_{r_1} = \frac{\delta^3m_1}{r_2^2}\left[\frac{b(c_2/m_2)^2}{r_2} - \frac{M}{b^2} - \frac{m_2}{(1+b)^2}\right].
\end{align*}
Similarly, 
\begin{align*}
0 = \dot{p}_{r_2} = \frac{\delta^3m_2}{r_2^2}\left[\frac{(c_2/m_2)^2}{r_2} - M - \frac{m_1}{(1+b)^2}\right].
\end{align*}
We can eliminate $r_2$ and the angular momentum $c_2$ from these equations if we divide each equation by $\delta^3m_i/r_2^2$ and multiply the $\dot{p}_{r_2}$ equation by $b$. Doing so gives
\begin{align*}
0 & = \left[\frac{b(c_2/m_2)^2}{r_2} - \frac{M}{b^2} - \frac{m_2}{(1+b)^2}\right] -  \left[\frac{b(c_2/m_2)^2}{r_2} - Mb - \frac{m_1b}{(1+b)^2}\right]\\
& = Mb + \frac{m_1b}{(1+b)^2} - \frac{M}{b^2} - \frac{m_2}{(1+b)^2} \\
& = \frac{Mb^3(1+b)^2 + m_1b^3 - M(1+b)^2 - m_2b^2}{b^2(1+b)^2}.
\end{align*}
By setting the numerator to zero, dividing by $M$, and collecting like powers of $b$, we find that
\begin{equation}
q(b) = 0, \text{ where } q(z) = z^5 + 2z^4 + \left(1 + \frac{m_1}{M}\right)z^3 - \left(1 + \frac{m_2}{M}\right)z^2 - 2z - 1.
\label{eqn31}
\end{equation}

\subsection{Analysis of $b$}
For most values of $m_1$, $m_2$, and $M$, we cannot factor the real polynomial $q(z)$ since it is a quintic. However, since the masses are all positive, we can analyze its roots. Descartes' Rule of Signs says that in this case, there is exactly one positive real root, for there is one sign change in the sequence of coefficients of $q(z)$. Since $b > 0$, it must be the one positive root.

In addition, we see that $q(0) = -1$, $q(z) \to \infty$ as $z \to \infty$, and $q(1) = (m_1 - m_2)/M$. By continuity of $q$, we may therefore invoke the Intermediate Value Theorem to place $b$ within $(0, \infty)$ according to the relative sizes of $m_1$ and $m_2$:
\begin{equation}
b \in  \begin{cases} 
          (0,1) & \text{if } m_1>m_2 \\
          \{1\} & \text{if } m_1 = m_2 \\
          (1, \infty) & \text{if } m_1 < m_2
       \end{cases}.
\label{eqn32}
\end{equation}
Note that \eqref{eqn32} means the case $m_1 = m_2$ forces $r_1 = r_2$, in accordance with the COM approach.

In practice, we must obtain $b$ numerically once values of $m_1$, $m_2$, and $M$ are fixed. We found that for all realistic masses (e.g. when $m_1,m_2 \ll M$), $b$ remains close to 1 (see Figure \ref{fig3a}). To get $b > 2$ or $b < 0.5$, we find that the mass ratios $\mu_i = m_i/M$ need to be on the order of 20, which is extremely unrealistic.

Figure \ref{fig3} shows the extreme case of a comet orbit with $b = 2$, to better demonstrate the effect on the asteroid trajectories. For more realistic asteroid masses, the effect is too small to observe on these graph scales.

\begin{figure}[ht]%
\centering
\includegraphics[width=0.8\textwidth]{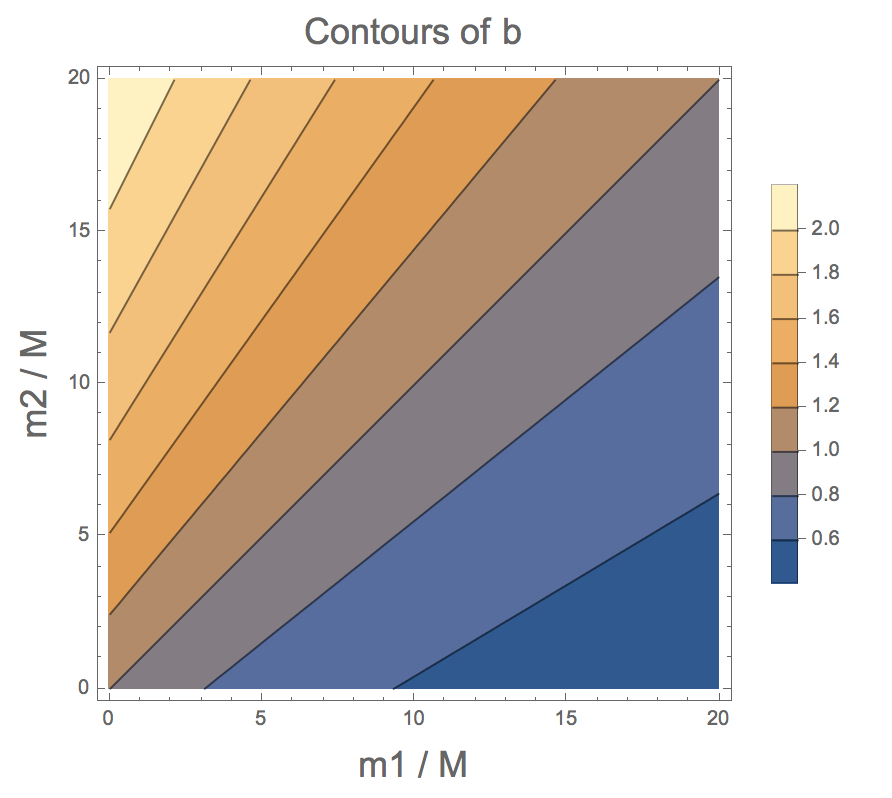}
\caption{Contours showing the value of $b$ as a function of the input mass ratios $m_1/M$ and $m_2/M$.}\label{fig3a}
\end{figure}

\begin{figure}[ht]%
\centering
\includegraphics[width=1.0\textwidth]{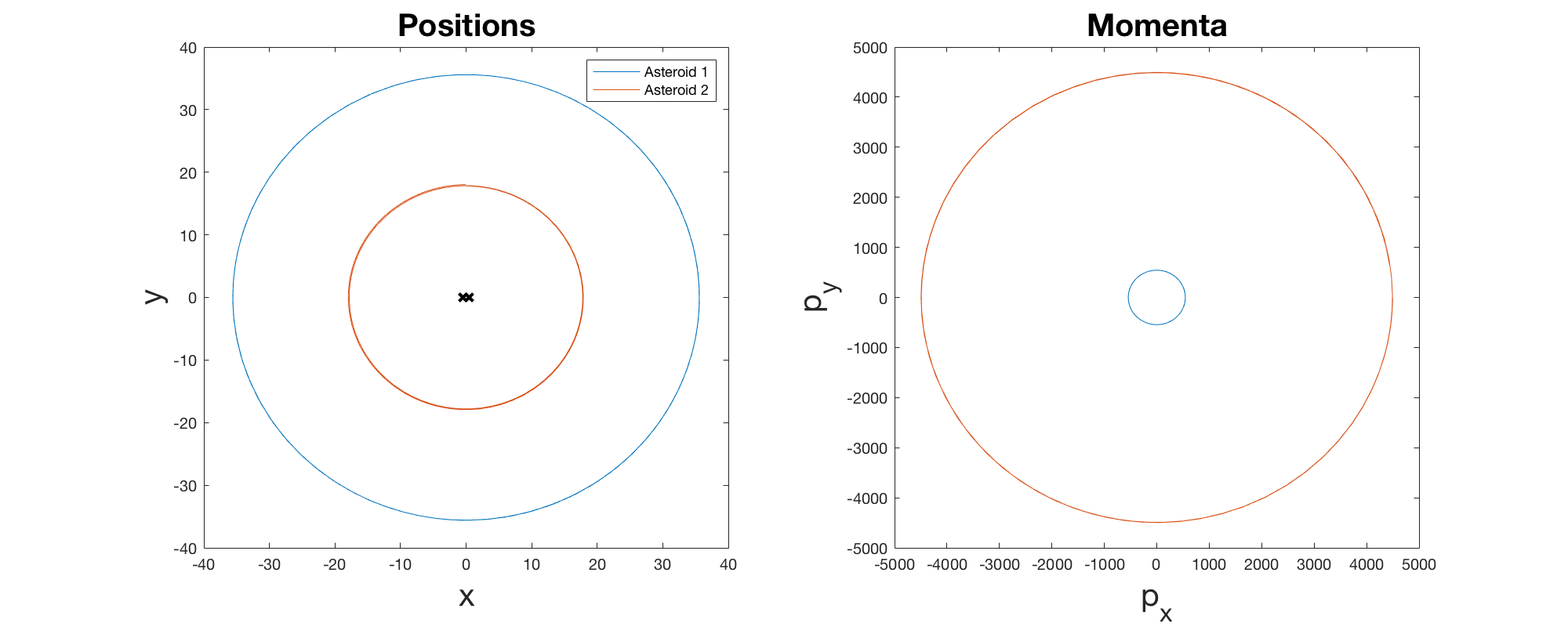}
\caption{A solution of \eqref{eqn34} with $b = 2$ and $\delta = 1$. To obtain this $b$, we had to use extremely large (and thus unrealistic) asteroid masses.}\label{fig3}
\end{figure}

\subsection{Periodic Solution to $O(\delta^3)$}
Once $b$ is determined, we can solve for $r_2$ in the $p_{r_2}$ ODE:
\begin{align*}
0 = \frac{(c_2/m_2)^2}{r_2} - M - \frac{m_1}{(1+b)^2} %\implies \frac{(c_2/m_2)^2}{r_2} = \frac{M(1+b)^2 + m_1}{(1+b)^2}\\
\implies r_2 = \frac{(1+b)^2(c_2/m_2)^2}{M(1+b)^2 + m_1}.
\end{align*}
Some algebraic manipulation involving the use of $q(b) = 0$ reveals
\begin{equation*}
    r_1 = br_2 = \frac{(1+b)^2(c_1/m_1)^2}{M(1+b)^2 + m_2b^2}.
\end{equation*}
We may also obtain $\dot{\theta}_i$:
\begin{equation}
\dot{\theta}_i = \omega = -\frac{2\pi}{P} + \frac{\delta^3[M(1+b)^2 + m_1]^2}{(1+b)^4(c_2/m_2)^3} = -\frac{2\pi}{P} + \frac{\delta^3[M(1+b)^2 + m_2b^2]^2}{(1+b)^4(c_1/m_1)^3}.
\end{equation}
Hence we have obtained two periodic circular solutions to \eqref{eqn29}:
\begin{align}
r_1(t) & = br_2 = \frac{(1+b)^2(c_1/m_1)^2}{M(1+b)^2 + m_2b^2}, & r_2(t) & = \frac{(1+b)^2(c_2/m_2)^2}{M(1+b)^2 + m_1},\nonumber\\
\theta_1(t) & = \pi + \theta_0 + \omega t, & \theta_2(t) & = \theta_0 + \omega t, \nonumber\\
p_{r_1}(t) & = 0, & p_{r_2}(t) & = 0, \nonumber\\
p_{\theta_1}(t) & = c_1 = \frac{m_1}{m_2}b^2c_2, & p_{\theta_2}(t) & = c_2.
\label{eqn34}
\end{align}
The period is
\[ T = \frac{2\pi}{\vert\omega\vert} = \frac{ 2\pi P (1+b)^4(c_2/m_2)^3}{2\pi(1+b)^4(c_2/m_2)^3 -2\pi P \delta^3[M(1+b)^2+m_1]^2}.\]

\subsection{Multipliers and Continuation}
Since there are more degrees of freedom here in the comet case than the Hill case, we shall work with the full $8\times8$ matrix variational equation for the monodromy matrix $\Xi(T)$. The variational equation is $\dot{\Xi} = S\Xi, \Xi(0) = I$, where $S = J_8 D^2(H_0 + \delta^3 H_3)$ along the solution \eqref{eqn34}. We found via Mathematica and verified by hand that $S$ has the following block structure:
\[S = \begin{bmatrix}
-A^T & B \\ C & A
\end{bmatrix},\] 
where
\begin{align*}
A = \begin{bmatrix}
0 & A_1/b & 0 & 0 \\
0 & 0 & 0 & 0 \\
0 & 0 & 0 & A_1 \\
0 & 0 & 0 & 0
\end{bmatrix}, & & B = \begin{bmatrix}
B_1 & 0 & 0 & 0 \\
0 & B_2 & 0 & 0 \\
0 & 0 & B_3 & 0 \\
0 & 0 & 0 & B_4
\end{bmatrix}, & & C = \begin{bmatrix}
C_{11} & 0 & C_{13} & 0 \\
0 & C_{22} & 0 & -C_{22} \\
C_{13} & 0 & C_{33} & 0 \\
0 & -C_{22} & 0 & C_{22} 
\end{bmatrix},
\end{align*}
for the time-independent entries given by
\begin{equation}\label{entriesofS}
\begin{aligned}
A_1 & = \frac{2\delta^3(c_2/m_2)}{r_2^3}, \\
B_1 & = \frac{\delta^3}{m_1},\ B_2 = \frac{\delta^3}{m_1b^2r_2^2},\ B_3 = \frac{\delta^3}{m_2},\ B_4 = \frac{\delta^3}{m_2r_2^2}, \\
C_{11} & = -\frac{\delta^3m_1[M(1+b)^3 + m_2b^2(3+b)]}{b^3(1+b)^3r_2^3},\ C_{13} = \frac{2\delta^3m_1m_2}{(1+b)^3r_2^3},\\
C_{33} & = -\frac{\delta^3m_2[M(1+b)^3 + m_1(1+3b)]}{(1+b)^3r_2^3},\ C_{22} = \frac{\delta^3m_1m_2b}{(1+b)^3r_2}.
\end{aligned}
\end{equation}

Mathematica gave the characteristic polynomial of $S$ as being of the form
\begin{equation}\label{eqn36}
\det(S - zI) = D_1z^2 + D_2z^4 + D_3z^6 + z^8.
\end{equation}
We are interested in $D_1$ only, but for completeness the coefficients are:
\begin{equation}\label{eqnDs}
\begin{aligned}
D_1 & = B_1B_3C_{22}\left[-\frac{A_1^2}{b^2}\left(b^2C_{11} + 2bC_{13} + C_{33}\right) + (B_2 + B_4)\left(C_{13}^2 - C_{11}C_{33}\right)\right], \\
D_2 & = C_{22}\left[\frac{A_1^2}{b_2}\left(B_1 + b^2B_3\right) + (B_2 + B_4)(B_1C_{11} + B_3C_{33})\right]\\
& \ \ \ \ - B_1B_3\left(C_{13}^2 - C_{11}C_{33}\right), \\
D_3 & = -\left[B_1C_{11} + (B_2 + B_4)C_{22} + B_3C_{33}\right].
\end{aligned}
\end{equation}
Since $S$ is time-independent, the monodromy matrix is $\Xi(T) = \exp(ST)$, like it is in the Hill case, and the multipliers will be $e^{zT}$, where $z$ are the roots of \eqref{eqn36}. So, to show that $+1$ is a multiplier exactly twice, we must show that $D_1$ is nonzero. Since $B_1, B_3$, and $C_{22}$ are each positive, it is sufficient to examine $D_1/(B_1B_3C_{22})$.

We write $D_1/(B_1B_3C_{22})$ in terms of $\delta, M, b, r_2$, and the asteroid mass ratios $\mu_i = m_i/M$, $i=1,2$. Using the expressions for $A_1$, $B_2$, $B_4$, $C_{11}$, $C_{13}$, and $C_{33}$ in \eqref{entriesofS}, replacing one of the $r_2$ from the $r_2^9$ in the common denominator of $-(A_1^2/b^2)(b^2C_{11}+2bC_{13}+C_{33})$ with its constant value in the circular periodic solution in \eqref{eqn34}, and using the quintic equation $q(b) = 0$ in \eqref{eqn31} to express each of $b^5$, $b^6$, and $b^7$ as linear combinations of $1$, $b$, $b^2$, $b^3$, and $b^4$, we employed Mathematica, and Maple, and also verified by hand (a lengthy computation), so show that
\[
\frac{D_1}{B_1B_3C_{22}} = \frac{\delta^9M^3}{b^5(1+b)^4r_2^8}\left[k_0 + k_1b + k_2b^2 + k_3b^3 + k_4b^4\right],
\]
where
\begin{align}
k_0 & = 6\mu_1 + 3\mu_2 + \mu_1\mu_2,\nonumber\\
k_1 & = 15\mu_1 + 12\mu_2 + 3\mu_1\mu_2,\nonumber\\
k_2 & = 15\mu_1 + 18\mu_2 + 11\mu_1\mu_2 + 3\mu_1^2 + \mu_2^2 - \mu_1\mu_2^2,\nonumber\\
k_3 & = 9\mu_1 + 12\mu_2 + 8\mu_1\mu_2 - 2\mu_1^2 + 4\mu_2^2 + 2\mu_1^2\mu_2 + 3\mu_1\mu_2^2,\nonumber\\
k_4 & = 3\mu_1 + 3\mu_2 + 2\mu_1\mu_2 - 2\mu_1^2 + 3\mu_2^2 - 2\mu_1^2\mu_2.
\label{eqn37}
\end{align}
The quantities $k_0$ and $k_1$ in \eqref{eqn37} are positive for positive $\mu_1$ and $\mu_2$. There are values of $\mu_1>1,\mu_2>1$ for which each of the quantities $k_2, k_3$, and $k_4$ in \eqref{eqn37} is negative, but for realistic masses (e.g., for $\mu_1,\mu_2 \in (0,1]$), we have
\begin{align*}
k_2 & \geq 15\mu_1 + 18\mu_2 + 10\mu_1\mu_2 + 3\mu_1^2 + \mu_2^2 > 0,\\
k_3 & \geq 7\mu_1 + 12\mu_2 + 8\mu_1\mu_2 + 4\mu_2^2 + 2\mu_1^2\mu_2 + 3\mu_1\mu_2^2 > 0,\\
k_4 & \geq \mu_1 + 3\mu_2 + 3\mu_2^2 > 0.
\end{align*}
Hence, $D_1 > 0$ for all $\mu_1,\mu_2 \in (0,1]$.

All of the nonzero entries of $S$ are a multiple of $\delta^3$, as follows from \eqref{entriesofS}. The matrix $\hat S = \delta^{-3} S$ does not depend on $\delta$ and the eigenvalues of $\hat S$ are determined by roots of
\[  \hat D_1 z^2 + \hat D_2 z^4 + \hat D_3 z^6 + z^8 = z^2(\hat D_1 + \hat D_2 z^2 + \hat D_3 z^4 + z^6),\]
where coefficients $\hat D_i$, $i=1,2,3$ are expressions, independent of $\delta$, given by the formulas in \eqref{eqnDs} in which each of the scalars $A_1$, $B_1$, $B_2$, $B_3$, $B_4$, $C_{11}$, $C_{13}$, $C_{22}$, and $C_{33}$ in \eqref{entriesofS} are each multiplied by $\delta^{-3}$ before being substituted. The quantities $D_1$ and $\hat D_1$ are related by $\delta^9 \hat D_1 = D_1$, so that $D_1>0$ implies $\hat D_1>0$. Hence $\hat S$ has eigenvalue $0$ with algebraic multiplicity two. For two circular periodic solutions in \eqref{eqn34}, one with a positive and one with a negative choice of $c_2$, say $c_2 = \pm 1$ (where $c_1 = \pm b^2(m_1/m_2)$ as determined by \eqref{eqn30}), an adaptation of the argument used in Meyer and Offin \cite{meyerOffin} can now be applied to prove continuation for all small positive values of $\delta$.

\begin{theorem}[Comet Orbits]\label{thmComets}
There exist two one-parameter families of near-circular comet-type periodic solutions in the Binary Asteroid Problem \eqref{eqn6} for all realistic masses $(m_1/M, m_2/M \in (0,1])$. These families tend to infinity.
\end{theorem}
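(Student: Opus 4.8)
The plan is to apply the Poincar\'e continuation method, in the form used by Meyer and Offin \cite{meyerOffin}, to continue the explicit circular periodic solutions \eqref{eqn34} of the $O(\delta^3)$ system \eqref{eqn29} to genuine periodic solutions of the full scaled Hamiltonian $H_{\xi\eta}$ in \eqref{eqn25}, and then to undo the scaling \eqref{eqn22} so as to land back in the Binary Asteroid Problem \eqref{eqn6}. Nothing in this argument invokes the COM reduction of Theorem \ref{thm1}: the continuation is carried out directly for $H_{BA}$, which is why no equal-mass hypothesis is needed.

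First I would assemble the seed orbits. For realistic masses $\mu_1,\mu_2\in(0,1]$, the analysis of the quintic $q$ in \eqref{eqn31} produces the unique admissible ratio $b>0$, and \eqref{eqn34} then gives, for each choice of sign of the angular momentum (say $c_2=\pm 1$, with $c_1$ pinned by \eqref{eqn30}), an exact circular periodic solution of \eqref{eqn29} with least period $T\to P$ as $\delta\to 0^+$. Along such a solution $r_i$, $p_{r_i}$, $p_{\theta_i}$ are constant and $\theta_1-\theta_2\equiv\pi$, so the variational matrix $S$ is constant, the monodromy matrix is $\Xi(T)=\exp(ST)$, and the multipliers are $e^{zT}$ as $z$ runs over the eigenvalues of $S$. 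Because $S$ is infinitesimally symplectic, its characteristic polynomial \eqref{eqn36} is even and has no constant term, hence factors as $z^2(z^6+D_3z^4+D_2z^2+D_1)$; the leading $z^2$ records the two unavoidable zero eigenvalues (the direction tangent to the orbit and its symplectic conjugate coming from the energy integral). The computation of $D_1$ in \eqref{eqn37} is exactly what shows $D_1>0$ for all $\mu_1,\mu_2\in(0,1]$, so $0$ is an eigenvalue of $S$ of algebraic multiplicity precisely two. Moreover every nonzero entry of $S$ carries a factor $\delta^3$ by \eqref{entriesofS}, so the remaining six eigenvalues are $O(\delta^3)$ while $T=O(1)$; hence for all sufficiently small $\delta>0$ none of the associated multipliers equals $1$, and each seed orbit is elementary in the sense of Meyer and Offin.

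With elementarity in hand, the continuation is routine. Write $H_{\xi\eta}=H_0+\delta^3H_3+E$ with $E=O(\delta^7)$ analytic in $\delta$; treating $\delta$ as the continuation parameter, restrict to an energy surface through the seed orbit and introduce a Poincar\'e section transverse to it, on which the surviving trivial multiplier is removed, so the linearized return map of the truncated system does not have $1$ as an eigenvalue there. The implicit function theorem then yields, for each small $\delta>0$ and each of the two sign choices of $c_2$, a periodic solution of the full system $H_{\xi\eta}$ near the corresponding seed orbit, depending smoothly on $\delta$; the collection over $\delta\in(0,\delta_0)$ is the asserted one-parameter family, and the two choices of $c_2$ give the two families. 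Finally I undo \eqref{eqn22}: a solution of the scaled system at parameter $\delta$ is a periodic solution of $H_{BA}$ in \eqref{eqn6} with asteroid coordinates $x_i=\delta^{-2}\xi_i$, $y_i=\delta^{-2}\eta_i$. Since the seed orbits have $r_i$ equal to a fixed positive constant by \eqref{eqn34}, the quantity $x_i^2+y_i^2=\delta^{-4}(\xi_i^2+\eta_i^2)$ stays comparable to $\delta^{-4}r_i^2$ along each family, so the asteroids are far from the primaries and $\sqrt{x_i^2+y_i^2}\to\infty$ as $\delta\to 0^+$; this is the comet-type behaviour and shows the families tend to infinity.

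The main obstacle is exactly the step the excerpt has already cleared: showing $D_1\neq 0$. The existence and location of $b$, the explicit circular solutions, the constancy and block structure of $S$, the Meyer--Offin continuation machinery, and the unscaling are all either completed above or mechanical, but the nondegeneracy of each seed orbit rests on the sign of the polynomial $k_0+k_1b+k_2b^2+k_3b^3+k_4b^4$ obtained after using $q(b)=0$ to eliminate $b^5$, $b^6$, $b^7$ from $D_1/(B_1B_3C_{22})$. Proving this is positive throughout the realistic range $\mu_1,\mu_2\in(0,1]$ --- rather than just for small mass ratios, where positivity is immediate --- is the delicate point, and the lower bounds on $k_2$, $k_3$, $k_4$ recorded in the excerpt are precisely what make it go through; outside that range those coefficients can turn negative, so the restriction to realistic masses is genuinely used.
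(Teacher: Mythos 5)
Your proposal is correct and follows essentially the same route as the paper: the circular seed solutions \eqref{eqn34} with the ratio $b$ from the quintic \eqref{eqn31}, elementarity of the seed orbits via the positivity of $D_1$ (equivalently $\hat D_1$) for $\mu_1,\mu_2\in(0,1]$, the Meyer--Offin continuation in $\delta$ with the two sign choices of $c_2$ giving the two families, and unscaling \eqref{eqn22} to obtain comet-type orbits receding to infinity as $\delta\to 0^+$. The only differences are presentational (arguing directly that the six nonzero eigenvalues of $S$ are $O(\delta^3)$ rather than passing to $\hat S=\delta^{-3}S$, and spelling out the tend-to-infinity claim), not a different method.
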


\begin{figure}[ht]%
\centering
\includegraphics[width=1.0\textwidth]{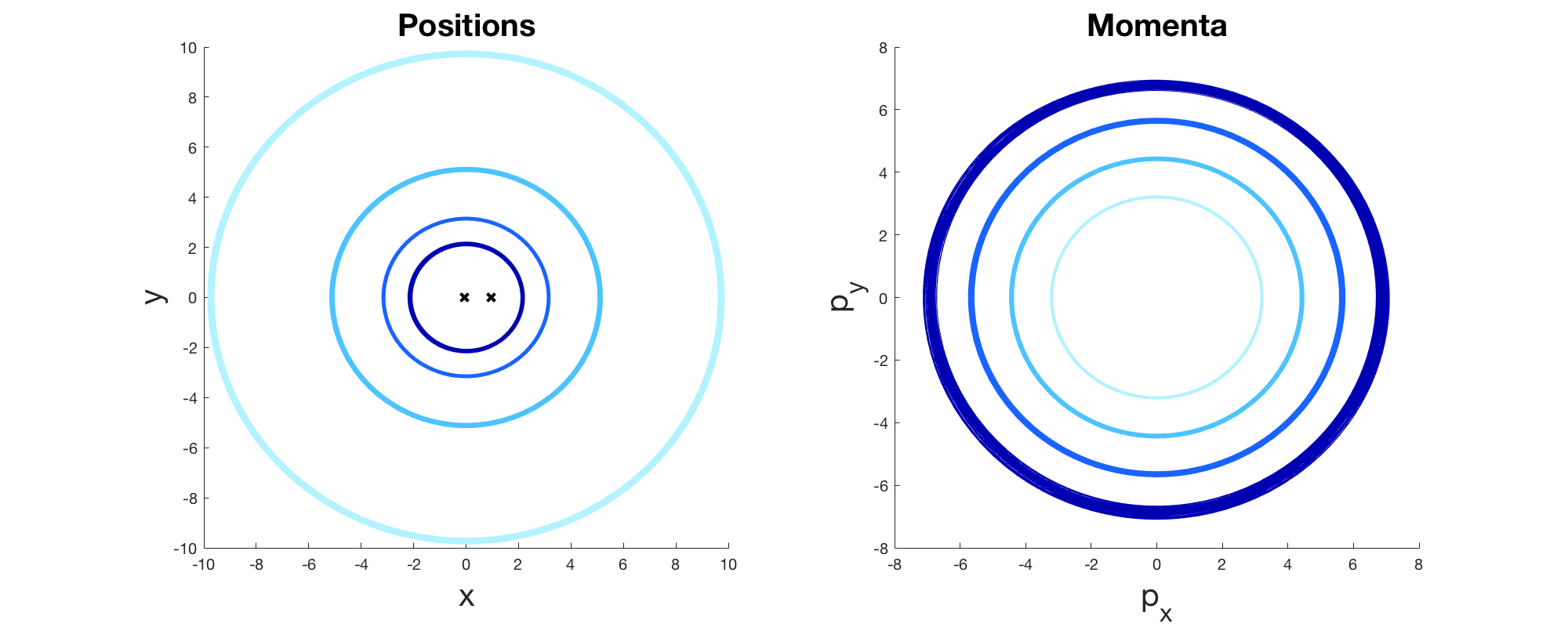}
\caption{Depiction of a numerical $\delta$ continuation process for a comet-type orbit \eqref{eqn34} with $b = 1.0001$, $M_1 = 95$, and $M_2 = 5$.
}\label{fig4}
\end{figure}

\begin{remark} This comet approach can be adapted to develop Hill-type equations of motion similar in form to \eqref{eqn29}. The main difference is that we would employ symplectic shifts to center each asteroid on one primary, then scale. Doing so produces circular Hill solutions out to $\delta^3$. However, multiplier analysis of these solutions shows they have $+1$ as a multiplier four times, and hence they do not continue. We believe that the lack of continuation in this case is due to the coupling term of the potential not being present in the equations of motion until $O(\delta^6)$.
\end{remark}

%%%%%%%%%%%%%% Conclusion %%%%%%%%%%%%%%%

\section{Conclusion}\label{sec13}
We have presented a planar 4-body model, the Binary Asteroid Problem, with Hamiltonian $H_{BA}$ in \eqref{eqn6}, and its COM reduced Binary Asteroid Problem with Hamiltonian $H_{xy}$ in \eqref{eqn10}. In the COM reduced Binary Asteroid Problem we have the existence of six relative equilibria (Theorem \ref{ExistenceCritPts}), four of which are saddle-centers (Theorem \ref{thmSpectralStability}) and collinear with the primaries, and two of which are hyperbolic (Theorem \ref{thmSpectralStability}) and form convex kites (and each is a rhombus) with the primaries. Near each saddle-center relative equilibrium there is a one-parameter family of periodic orbits (Corollary \ref{LyapunovFamily}). Each relative equilibrium limits to its corresponding relative equilibrium in the Generalized Copenhagen Problem as the common mass of the asteroids goes to $0$ (Theorem \ref{ExistenceCritPts}). In the COM reduced Binary Asteroid Problem we have two one-parameter families of near-circular periodic orbits near the origin (Theorem \ref{Thm:CentralHills}) on which the two asteroids form a binary asteroid. Also in the COM reduced Binary Asteroid Problem we found two one-parameter families of near-circular Hill-type orbits (Theorem \ref{thm2}) where one asteroids orbits one primary and the other asteroid orbits the other primary. In the Binary Asteroid Problem we found two families of near circular periodic comet-type orbits where the two asteroids are approximately opposite each other through the origin (Theorem \ref{thmComets}).

The approaches used in this paper can be adapted to produce results in the spatial Binary Asteroid Problem. We have numerical simulations showing spatial continuation of the Hill-type and comet-type orbits. However, multiplier analysis is much more difficult because the variational equation for each of these spatial orbits becomes time-dependent. In addition to questions (1), (2), (3), (4), and part of (5) posed in the Introduction, future research will also include the study of these Hill-type and comet-type orbits in the spatial Binary Asteroid Problem, and the continuation of planar and spatial periodic orbits in the Binary Asteroid Problem into the four-body problem.

%\backmatter

%\bmhead{Acknowledgments}

%\section*{Declarations} 
%{\bf Competing Interests} The authors declare no competing interests.

\bibliographystyle{amsplain}
\bibliography{references}

\end{document}